\documentclass[11pt]{article}
\usepackage{epsf}
\usepackage{amsmath}
\usepackage{epsfig}
\usepackage{times}
\usepackage{amssymb}
\usepackage{amsthm}
\usepackage{setspace}
\usepackage{cite}

\usepackage{algorithmic}  
\usepackage{algorithm}

\usepackage{shadow}
\usepackage{fancybox}
\usepackage{fancyhdr}

\def\w{{\bf w}}

\def\y{{\bf y}}

\def\x{{\bf x}}

\def\x{{\mathbf x}}

\def\w{{\bf w}}

\def\x{{\bf x}}
\def\y{{\bf y}}
\def\z{{\bf z}}

\def\h{{\bf h}}

\def\be{\begin{equation}}
\def\ee{\end{equation}}
\def\ba{\left[\begin{array}}
\def\ea{\end{array}\right]}

\def\t{{\bf t}}

\def\w{{\bf w}}

\def\x{{\bf x}}
\def\y{{\bf y}}
\def\z{{\bf z}}

\def\1{{\bf 1}}

\def\W{{\bf W}}

\def\0{{\bf 0}}

\def\erfinv{\mbox{erfinv}}
\def\htheta{\hat{\theta}}

\def\H{{\bf H}}
\def\X{{\bf X}}

\def\Hnorm{\H_{(norm)}}

\def\gammainc{\gamma_{inc}}
\def\gammaincpl{\gamma_{inc,+}}


\def\erfinv{\mbox{erfinv}}
\def\htheta{\hat{\theta}}


\def\Sw{S_w}

\def\hw{\bar{\h}}

\def\cweak{c_{w}}




\def\Sw{S_w}

\def\hw{\bar{\H}}
\def\cweak{c_{w}}
\def\zw{\bar{\z}}

\def\betaweak{\beta_{w}}
\def\thetaweak{\theta_{w}}
\def\hthetaweak{\hat{\theta}_{w}}

\newtheorem{theorem}{Theorem}
\newtheorem{corollary}{Corollary}

\setlength{\oddsidemargin}{0in} \setlength{\evensidemargin}{0in}
\setlength{\textwidth}{6.5in} 
\setlength{\textheight}{9in} 
\setlength{\topmargin}{-0.25in}

\begin{document}

\begin{singlespace}

\title {Compressed sensing of block-sparse positive vectors 
\footnote{ This work was supported in part by NSF grant \#CCF-1217857.}
}
\author{
\textsc{Mihailo Stojnic}
\\
\\
{School of Industrial Engineering}\\
{Purdue University, West Lafayette, IN 47907} \\
{e-mail: {\tt mstojnic@purdue.edu}} }
\date{}
\maketitle

\centerline{{\bf Abstract}} \vspace*{0.1in}

In this paper we revisit one of the classical problems of compressed sensing. Namely, we consider linear under-determined systems with sparse solutions. A substantial success in mathematical characterization of an $\ell_1$ optimization technique typically used for solving such systems has been achieved during the last decade. Seminal works \cite{CRT,DOnoho06CS} showed that the $\ell_1$ can recover a so-called linear sparsity (i.e. solve systems even when the solution has a sparsity linearly proportional to the length of the unknown vector). Later considerations \cite{DonohoPol,DonohoUnsigned} (as well as our own ones \cite{StojnicCSetam09,StojnicUpper10}) provided the precise characterization of this linearity. In this paper we consider the so-called structured version of the above sparsity driven problem. Namely, we view a special case of sparse solutions, the so-called block-sparse solutions. Typically one employs $\ell_2/\ell_1$-optimization as a variant of the standard $\ell_1$ to handle block-sparse case of sparse solution systems.  We considered systems with block-sparse solutions in a series of work \cite{StojnicCSetamBlock09,StojnicUpperBlock10,StojnicICASSP09block,StojnicJSTSP09} where we were able to provide precise performance characterizations if the $\ell_2/\ell_1$-optimization similar to those obtained for the standard $\ell_1$ optimization in \cite{StojnicCSetam09,StojnicUpper10}. Here we look at a similar class of systems where on top of being block-sparse the unknown vectors are also known to have components of the same sign. In this paper we slightly adjust $\ell_2/\ell_1$-optimization to account for the known signs and provide a precise performance characterization of such an adjustment.

\vspace*{0.25in} \noindent {\bf Index Terms: Compressed sensing; $\ell_2/\ell_1$ optimization; linear systems of equations; signed unknown vectors}.

\end{singlespace}

\section{Introduction}
\label{sec:back}

As usual we start by recalling on the basic mathematical definitions related to under-determined systems of linear equations. These problems are one of the mathematical cornerstones of compressed sensing (of course a great deal of work has been done in the compressed sensing; instead of reviewing it here we, for more on compressed sensing ideas, refer to the introductory papers \cite{CRT,DOnoho06CS}). Since this paper will be dealing with certain mathematical aspects of compressed sensing under-determined systems of linear equations will be its a focal point.

To insure that we are on a right mathematical track we will along these lines start with providing their an as simple as possible description. One typically starts with a systems matrix $A$ which is an $M\times N$ ($M\leq N$) dimensional matrix with real entries and then considers an $N$ dimensional vector $\tilde{\x}$ that also has real entries but on top of that no more than $K$ nonzero entries (in the rest of this paper we will call such a vector $K$-sparse). Then one forms the product of $A$ and $\tilde{\x}$ to obtain $\y$
\begin{equation}
\y=A\tilde{\x}. \label{eq:defy}
\end{equation}
Clearly, in general $\y$ is an $M$ dimensional vector with real entries. Then, for a moment one pretends that $\tilde{\x}$ is not known and poses the following inverse problem: given $A$ and $\y$ from (\ref{eq:defy}) can one then determine $\tilde{\x}$? Or in other words, can one for a given pair $A$ and $\y$ find the $k$ sparse solution of the following linear systems of equation type of problem (see, Figure \ref{fig:model})
\begin{equation}
A\x=\y. \label{eq:system}
\end{equation}
\begin{figure}[htb]
\centering
\centerline{\epsfig{figure=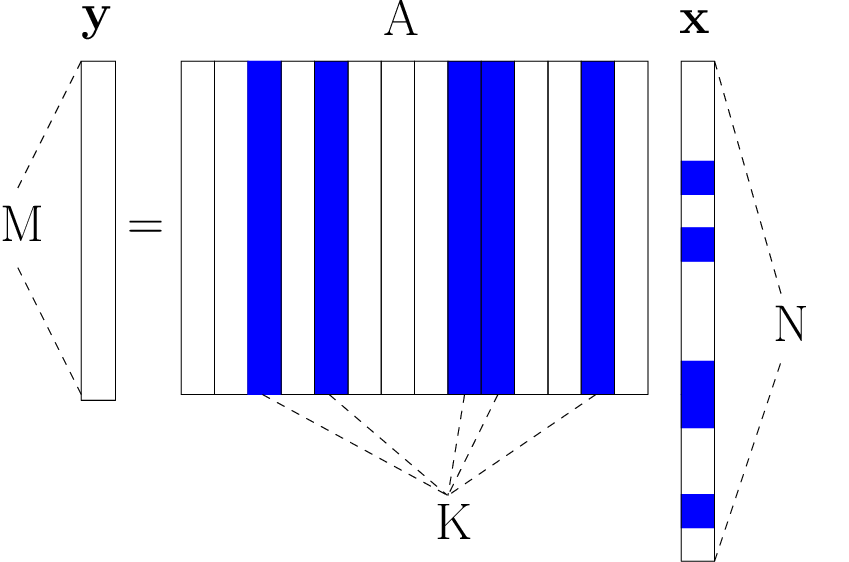,width=10.5cm,height=6cm}}
\caption{Model of a linear system; vector $\x$ is $k$-sparse}
\label{fig:model}
\end{figure}
Of course, based on (\ref{eq:defy}) such an $\x$ exists (moreover, it is an easy algebraic exercise to show that when $k<m/2$ it is in fact unique). Additionally, we will assume that there is no $\x$ in (\ref{eq:system}) that is less than $k$ sparse. One often (especially within the compressed sensing context) rewrites the problem described above (and given in (\ref{eq:system})) in the following way
\begin{eqnarray}
\mbox{min} & & \|\x\|_{0}\nonumber \\
\mbox{subject to} & & A\x=\y, \label{eq:l0}
\end{eqnarray}
where $\|\x\|_{0}$ is what is typically called $\ell_0$ norm of vector $\x$. For all practical purposes we will view $\|\x\|_{0}$ as the number that counts how many nonzero entries $\x$ has.

To make writing in the rest of the paper easier, we will assume the
so-called \emph{linear} regime, i.e. we will assume that $K=\beta N$
and that the number of equations is $M=\alpha N$ where
$\alpha$ and $\beta$ are constants independent of $N$ (more
on the non-linear regime, i.e. on the regime when $M$ is larger than
linearly proportional to $K$ can be found in e.g.
\cite{CoMu05,GiStTrVe06,GiStTrVe07}). Of course, we do mention that all of our results can easily be adapted to various nonlinear regimes as well.

Looking back at (\ref{eq:system}), clearly one can consider an exhaustive search type of solution where one would look at all subsets of $k$ columns of $A$ and then attempt to solve the resulting system. However, in the linear regime that we assumed above such an approach becomes prohibitively slow as $n$ grows. That of course led in last several decades towards a search for more clever algorithms for solving (\ref{eq:system}). Many great algorithms were developed (especially during the last decade) and many of them have even provably excellent performance measures (see, e.g. \cite{JATGomp,JAT,NeVe07,DTDSomp,NT08,DaiMil08,DonMalMon09}).

A particularly successful technique for solving (\ref{eq:system}) that will be of our interest in this paper is a linear programming relaxation of (\ref{eq:l0}), called $\ell_1$-optimization. (Variations of the standard $\ell_1$-optimization from e.g.
\cite{CWBreweighted,SChretien08,SaZh08}) as well as those from \cite{SCY08,FL08,GN03,GN04,GN07,DG08,StojnicLqThrBnds10,StojnicLiftLqThrBnds13} related to $\ell_q$-optimization, $0<q<1$
are possible as well.) Basic $\ell_1$-optimization algorithm finds $\x$ in
(\ref{eq:system}) or (\ref{eq:l0}) by solving the following $\ell_1$-norm minimization problem
\begin{eqnarray}
\mbox{min} & & \|\x\|_{1}\nonumber \\
\mbox{subject to} & & A\x=\y. \label{eq:l1}
\end{eqnarray}
Due to its popularity the literature on the use of the above algorithm is rapidly growing. Surveying it here goes way beyond the main interest of this paper and we defer doing so to a review paper. Here we just briefly mention that in seminal works \cite{CRT,DOnoho06CS} it was proven in a statistical context that for any $0<\alpha\leq 1$ there will be a $\beta$ such that $\tilde{\x}$ is the solution of (\ref{eq:l1}). \cite{DonohoPol,DonohoUnsigned} (later our own work \cite{StojnicCSetam09,StojnicUpper10} as well) for any $0<\alpha\leq 1$ determined the exact values of $\beta$ such that almost any $\tilde{\x}$ is the solution of (\ref{eq:l1}). That essentially settled a statistical performance characterization of (\ref{eq:l1}) when employed as an alternate to (\ref{eq:l0}).

The bottom line of considerations presented in \cite{StojnicCSetam09,StojnicUpper10,DonohoPol,DonohoUnsigned} is the following theorem.

\begin{theorem}(Exact threshold)
Let $A$ be an $M\times N$ matrix in (\ref{eq:system})
with i.i.d. standard normal components. Let
the unknown $\x$ in (\ref{eq:system}) be $K$-sparse. Further, let the location and signs of nonzero elements of $\x$ be arbitrarily chosen but fixed.
Let $K,M,N$ be large
and let $\alpha_w=\frac{M}{N}$ and $\beta_w=\frac{K}{N}$ be constants
independent of $M$ and $N$. Let $\erfinv$ be the inverse of the standard error function associated with zero-mean unit variance Gaussian random variable.  Further, let $\alpha_w$ and $\beta_w$ satisfy the following:

\noindent \underline{\underline{\textbf{Fundamental characterization of the $\ell_1$ performance:}}}

\begin{center}
\shadowbox{$
(1-\beta_w)\frac{\sqrt{\frac{2}{\pi}}e^{-(\erfinv(\frac{1-\alpha_w}{1-\beta_w}))^2}}{\alpha_w}-\sqrt{2}\erfinv (\frac{1-\alpha_w}{1-\beta_w})=0.
$}
-\vspace{-.5in}\begin{equation}
\label{eq:thmweaktheta2}
\end{equation}
\end{center}

Then:
\begin{enumerate}
\item If $\alpha>\alpha_w$ then with overwhelming probability the solution of (\ref{eq:l1}) is the $k$-sparse $\x$ from (\ref{eq:system}).
\item If $\alpha<\alpha_w$ then with overwhelming probability there will be a $K$-sparse $\x$ (from a set of $\x$'s with fixed locations and signs of nonzero components) that satisfies (\ref{eq:system}) and is \textbf{not} the solution of (\ref{eq:l1}).
    \end{enumerate}
then with overwhelming probability there will be a $K$-sparse $\x$ (from a set of $\x$'s with fixed locations and signs of nonzero components) that satisfies (\ref{eq:system}) and is \textbf{not} the solution of (\ref{eq:l1}).
\label{thm:thmweakthr}
\end{theorem}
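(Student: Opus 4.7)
The plan is to follow the null-space/escape-through-a-mesh strategy that Stojnic pioneered in \cite{StojnicCSetam09,StojnicUpper10}. First I would reduce recovery of $\tilde{\x}$ by (\ref{eq:l1}) to a geometric condition on the null space of $A$. A standard convexity argument shows that $\tilde{\x}$ is the unique minimizer of (\ref{eq:l1}) if and only if for every nonzero $\w$ in the null space of $A$, $\|\tilde{\x}+\w\|_1>\|\tilde{\x}\|_1$. Writing $K$ for the support of $\tilde{\x}$ and $\sigma_i\in\{\pm 1\}$ for the signs of its nonzero entries, this is equivalent to
\begin{equation}
\sum_{i\in K}\sigma_i w_i+\|\w_{\bar K}\|_1>0\quad\text{for all nonzero }\w\in\mathrm{null}(A).
\end{equation}
Since the location and signs of nonzero entries are fixed and $A$ has i.i.d.\ standard normal entries, by rotational invariance one may pass to a normalized setting and study the event that the random null space misses a deterministic convex cone (the descent cone of $\|\cdot\|_1$ at $\tilde{\x}$).

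Next I would invoke a Gordon-type escape-through-a-mesh comparison. Concretely, for Gaussian $A$ the null space is a uniformly distributed $(N-M)$-dimensional subspace, and one can recast the above null-space condition as a min-max problem of the form
\begin{equation}
\min_{\w\in S^{N-1}\cap \mathrm{null}(A)}\left(\|\w_{\bar K}\|_1+\sum_{i\in K}\sigma_i w_i\right)>0.
\end{equation}
Applying Gordon's inequality (or equivalently the dual Grassmann/statistical-dimension formalism) transforms the random min-max in $\w$ into a deterministic min-max over a single Lagrange parameter $\theta\geq 0$, involving independent standard Gaussians $\h_i$. The feasibility condition becomes a comparison between $\sqrt{M}$ (the Gaussian norm of an $M$-vector) and the expected value
\begin{equation}
\Exp\,\min_{\theta\geq 0}\sqrt{\,\sum_{i\in\bar K}\big(|h_i|-\theta\big)_+^2+\sum_{i\in K}h_i^2\,},
\end{equation}
which, after concentration, is essentially deterministic in the linear regime $K=\beta_w N$, $M=\alpha_w N$.

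The third step is to evaluate the saddle point. The first-order condition in $\theta$ gives $\theta=\sqrt{2}\,\erfinv\!\left(\frac{1-\alpha_w}{1-\beta_w}\right)$, since $\Prob(|h|>\theta)$ is determined by how much mass on $\bar K$ survives after thresholding. Substituting this optimal $\theta$ back into the squared objective and matching it with $\alpha_w$ (the Gaussian squared norm of an $M$-vector normalized by $N$) produces precisely the scalar equation (\ref{eq:thmweaktheta2}). The direct part (item 1) follows by showing that above threshold the deterministic min-max is strictly positive and Gordon's inequality together with Gaussian concentration yields the overwhelming probability bound. The converse (item 2) follows from the reverse (upper-bound) Gordon/Stojnic argument developed in \cite{StojnicUpper10}, which shows that below threshold a violating $\w$ exists with overwhelming probability.

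The main obstacle is the converse: Gordon's inequality alone yields only one direction, so one needs the sharp upper bound showing that the escape-through-a-mesh estimate is tight. This is exactly what requires the refined machinery of \cite{StojnicUpper10} (or equivalently the Amelunxen-Lotz-McCoy-Tropp statistical dimension computation, or the Donoho-Tanner polytope angle calculation in \cite{DonohoPol,DonohoUnsigned}). Once both inequalities are in hand, concentration of the underlying Gaussian functionals around their expectations upgrades ``in expectation'' statements to ``with overwhelming probability,'' yielding the stated phase transition at $\alpha_w$.
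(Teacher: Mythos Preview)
Your overall strategy is exactly the one the paper invokes: the null-space characterization, Gordon's escape-through-a-mesh bound from \cite{StojnicCSetam09} for the direct part, and the matching upper bound from \cite{StojnicUpper10} for the converse; the paper's own ``proof'' is in fact nothing more than a pointer to these references (together with \cite{DonohoPol,DonohoUnsigned}), so at the level of approach you are aligned with it.

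There is, however, a slip in your displayed deterministic objective. As written,
\[
\Exp\,\min_{\theta\geq 0}\sqrt{\,\sum_{i\in\bar K}\big(|h_i|-\theta\big)_+^2+\sum_{i\in K}h_i^2\,}
\]
has no $\theta$-dependence in the support term, so the inner minimum is attained at $\theta\to\infty$, collapsing the whole expression to $\sqrt{\sum_{i\in K}h_i^2}\approx\sqrt{K}$ and yielding the trivial (and incorrect) threshold $\alpha_w=\beta_w$. The correct distance-to-subdifferential-cone expression carries $(h_i-\theta\sigma_i)^2$ on the support, and it is precisely this $\theta^2$ penalty that balances the soft-thresholding on $\bar K$ and produces the stationarity condition $\theta=\sqrt{2}\,\erfinv\!\big(\tfrac{1-\alpha_w}{1-\beta_w}\big)$ you quote. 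Equivalently, in Stojnic's original Lagrangian parametrization (the $d=1$ specialization of the block computation in this paper), the dual variable enters through the constraint $\sum_{i\in K}\sigma_i w_i\geq \|\w_{\bar K}\|_1$ rather than through soft-thresholding, but either route---once the support term is handled correctly---lands on (\ref{eq:thmweaktheta2}). Fix that one formula and your sketch goes through.
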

\begin{proof}
The first part was established in \cite{StojnicCSetam09} and the second one was established in \cite{StojnicUpper10}. An alternative way of establishing the same set of results was also presented in \cite{StojnicEquiv10}. Of course, similar results were obtained initially in \cite{DonohoPol,DonohoUnsigned}.
\end{proof}

As mentioned above, the above theorem (as well as corresponding results obtained earlier in \cite{DonohoPol,DonohoUnsigned})) essentially settled typical behavior of $\ell_1$ optimization when used for solving (\ref{eq:system}) or (\ref{eq:l0}). In this paper we will look at a problem similar to the one from (\ref{eq:l0}) (or (\ref{eq:system})). Namely, we will view problem from (\ref{eq:system})) within the following framework: we will assume that $\tilde{\x}$ is not only sparse but also what is called block-sparse. Such an assumption can then be incorporated in the recovery algorithm as well. Before proceeding further with the presentation of such an algorithm we briefly sketch how the rest of the paper will be organized. In Section
\ref{sec:blsppos} we first introduce the block-sparse vectors and their a special variant that we will call \emph{positive} block-sparse vectors. In Section \ref{sec:analblsppos} we then present a performance analysis of an algorithm that can be used for solving linear under-determined systems known to have \emph{positive} block-sparse solutions. Finally, in Sections \ref{sec:conc} we discuss obtained results and provide a few conclusions.

\section{Block-sparse positive vectors}
\label{sec:blsppos}

What we described in the previous section assumes solving an under-determined system of linear equations with a standard restriction that the solution vector is sparse. Sometimes one may however encounter
applications when the unknown $\x$ in addition to being sparse has a
certain structure as well. The so-called block-sparse vectors are such a type of vectors and will be the main subject of this paper. These vectors and their potential applications and recovery algorithms were
investigated to a great detail in a series of recent references (see e.g. \cite{EldBol09,EKB09,SPH,FHicassp,EMsub,BCDH08,StojnicICASSP09block,StojnicJSTSP09,GaZhMa09,CeInHeBa09}). A related problem
of recovering jointly sparse vectors and its applications were also
considered to a great detail in e.g. \cite{ZeGoAd09,ZeWaSeGoAd08,TGS05,BWDSB05,CH06,CREKD,MEldar,Temlyakov04,VPH,BerFri09,EldRau09,BluDav09,NegWai09} and many
references therein. While various other structures as well as their applications gained significant interest over last few years we here refrain from describing them into fine details and instead refer to nice work of e.g.
\cite{KDXH09,XKAH09,KXAH09,RFPrank,StojnicISIT2010binary}. Since we will be interested in characterizing mathematical properties of solving linear systems that are similar to many of those mentioned above we just state here in brief that from a mathematical point of view in all these cases one attempts to improve the
recoverability potential of the standard algorithms (which are typically similar to the one described in the
previous section) by incorporating the knowledge of the unknown vector
structure.

To get things started we first introduce the block-sparse vectors.
The subsequent exposition will also be somewhat less cumbersome if we assume that
integers $N$ and $d$ are chosen such that $n=\frac{N}{d}$ is an
integer and it represents the total number of blocks that $\x$
consists of. Clearly $d$ is the length of each block. Furthermore,
we will assume that $m=\frac{M}{d}$ is an integer as well and that
$\X_i=\x_{(i-1)d+1:id}, 1\leq i\leq n$, are the $n$ blocks of $\x$
(see Figure \ref{fig:blspmodel}).
\begin{figure}[htb]
\centering
\centerline{\epsfig{figure=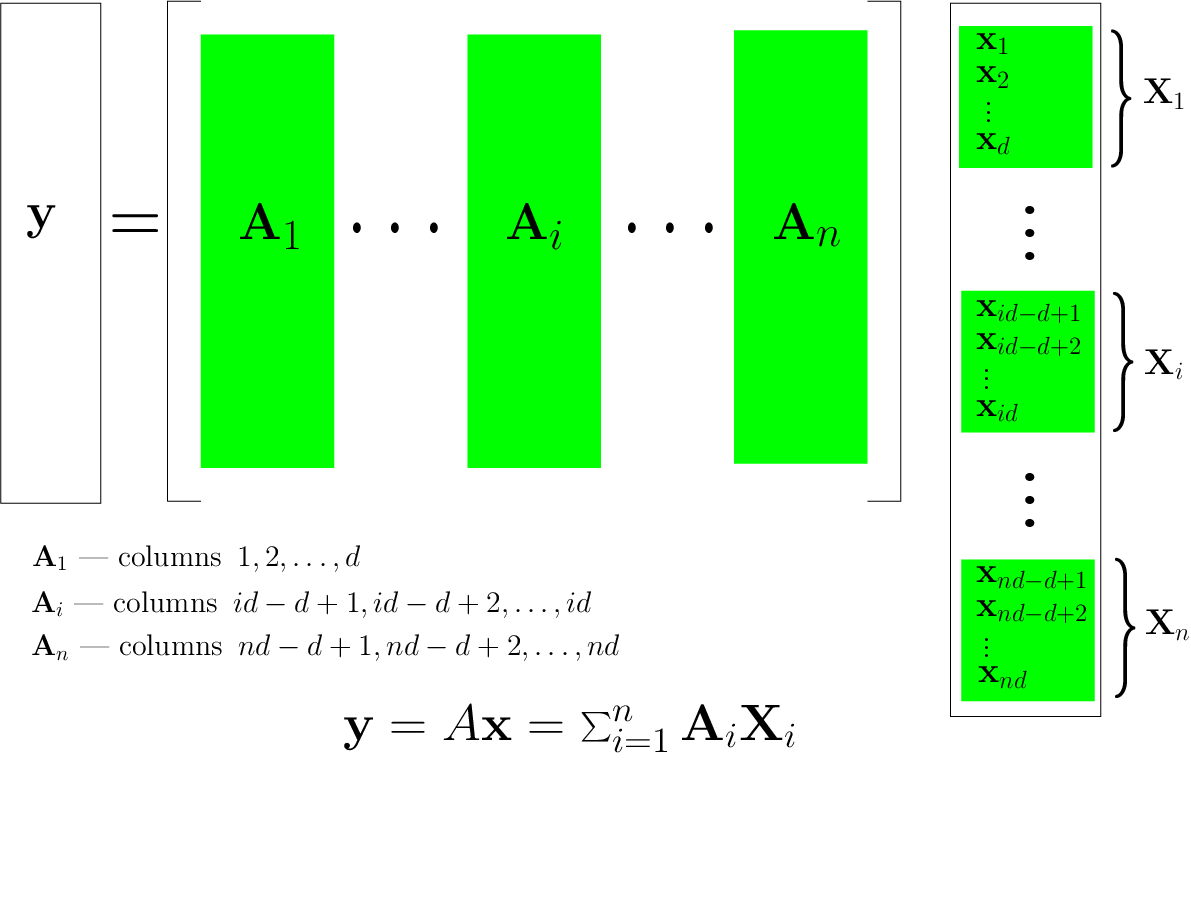,width=12cm,height=10cm}}
\vspace{-0.55in} \caption{Block-sparse model} \label{fig:blspmodel}
\end{figure}
Then we will call any signal $\x$ k-block-sparse if its at most
$k=\frac{K}{d}$ blocks $\X_i$ are non-zero (non-zero block is a block
that is not a zero block; zero block is a block that has all
elements equal to zero). Since $k$-block-sparse signals are
$K$-sparse one could then use (\ref{eq:l1}) to recover the solution
of (\ref{eq:system}). While this is possible, it clearly uses the
block structure of $\x$ in no way. There are of course many ways how one can attempt to exploit the block-sparse structure. Below we just briefly mentioned a few of them.

A few approaches from a vast literature cited above have recently attracted significant amount of attention. The first thing one can think of when facing the block-structured unknown vectors is how to extend results known in the non-block (i.e. standard) case. In \cite{TGS05} the standard OMP (orthogonal matching pursuit) was generalized so that it can handle the jointly-sparse vectors more efficiently and improvements over the standard OMP were demonstrated. In \cite{BCDH08,EMsub} algorithms similar to the one from this paper were considered. It was explicitly shown through the block-RIP (block-restricted isometry property) type of analysis (which essentially extends to the block case the concepts introduced in \cite{CRT} for the non-block scenario) that one can achieve improvements in recoverable thresholds compared to the non-block case. Also, important results were obtained in \cite{EldRau09} where it was shown (also through the block-RIP type of analysis) that if one considers average case recovery of jointly-sparse signals the improvements in recoverable thresholds over the standard non-block signals are possible (of course, trivially, jointly-sparse recovery offers no improvement over the standard non-block scenario in the worst case). All these results provided a rather substantial basis for belief that the block-sparse recovery can provably be significantly more successful than the standard non-block one.

To exploit the block structure of
$\x$ in \cite{SPH} the following polynomial-time algorithm (essentially a
combination of $\ell_2$ and $\ell_1$ optimizations) was considered (see also e.g. \cite{ZeWaSeGoAd08,ZeGoAd09,BCDH08,EKB09,BerFri09})
\begin{eqnarray}
\mbox{min} & & \sum_{i=1}^{n}\|\x_{(i-1)d+1:id}\|_2\nonumber \\
\mbox{subject to} & & A\x=\y. \label{eq:l2l1}\vspace{-.1in}
\end{eqnarray}
Extensive simulations in \cite{SPH} demonstrated that as $d$ grows
the algorithm in (\ref{eq:l2l1}) significantly outperforms the
standard $\ell_1$. The following was shown in \cite{SPH} as well:
let $A$ be an $M\times N$ matrix with a basis of null-space
comprised of i.i.d. Gaussian elements; if
$\alpha=\frac{M}{N}\rightarrow 1$ then there is a constant $d$ such
that all $k$-block-sparse signals $\x$ with sparsity $K\leq \beta N,
\beta\rightarrow \frac{1}{2}$, can be recovered with overwhelming
probability by solving (\ref{eq:l2l1}).
The precise relation between
$d$ and how fast $\alpha\longrightarrow 1$ and $\beta\longrightarrow
\frac{1}{2}$ was quantified in \cite{SPH} as well. In \cite{StojnicICASSP09block,StojnicJSTSP09} we extended the results from
\cite{SPH} and obtained the values of the recoverable block-sparsity for any
$\alpha$, i.e. for $0\leq \alpha \leq 1$. More precisely, for any
given constant $0\leq \alpha \leq 1$ we in \cite{StojnicICASSP09block,StojnicJSTSP09} determined a constant
$\beta=\frac{K}{N}$ such that for a sufficiently large $d$ (\ref{eq:l2l1})
with overwhelming probability
recovers any $k$-block-sparse signal with sparsity less then $K$.
(Under overwhelming probability we in this paper assume
a probability that is no more than a number exponentially decaying in $N$ away from $1$.)


Clearly, for any given constant $\alpha\leq 1$ there is a maximum
allowable value of $\beta$ such that for \emph{any} given $k$-sparse $\x$ in (\ref{eq:system}) the solution of (\ref{eq:l2l1})
is with overwhelming probability exactly that given $k$-sparse $\x$. This value of
$\beta$ is typically referred to as the \emph{strong threshold} (see
\cite{DonohoPol,StojnicCSetamBlock09}). Similarly, for any given constant
$\alpha\leq 1$ and \emph{any} given $\x$ with a given fixed location and a given fixed directions of non-zero blocks
there will be a maximum allowable value of $\beta$ such that
(\ref{eq:l2l1}) finds that given $\x$ in (\ref{eq:system}) with overwhelming
probability. We will refer to this maximum allowable value of
$\beta$ as the \emph{weak threshold} and will denote it by $\beta_{w}$ (see, e.g. \cite{StojnicICASSP09,StojnicCSetam09}).

While \cite{StojnicICASSP09block,StojnicJSTSP09} provided fairly sharp strong threshold values they had done so in a somewhat asymptotic sense. Namely, the analysis presented in \cite{StojnicICASSP09block,StojnicJSTSP09} assumed fairly large values of block-length $d$. As such the analysis in \cite{StojnicICASSP09block,StojnicJSTSP09} then provided an ultimate performance limit of $\ell_2/\ell_1$-optimization rather than its performance characterization as a function of a particular fixed block-length.

In our own work \cite{StojnicCSetamBlock09} we extended the results of \cite{StojnicICASSP09block,StojnicJSTSP09} and provided a novel probabilistic framework for performance characterization of (\ref{eq:l2l1}) through which we were finally able to view block-length as a parameter of the system (the heart of the framework was actually introduced in \cite{StojnicCSetam09} and it seemed rather powerful; in fact, we afterwards found hardly any sparse type of problem that the framework was not able to handle with an almost impeccable precision). Using the framework we obtained lower bounds on $\beta_w$. These lower bounds were in an excellent numerical agreement with the values obtained for $\beta_w$ through numerical simulations. In a followup \cite{StojnicUpperBlock10} we then showed that the lower bounds on $\beta_w$ obtained in \cite{StojnicCSetamBlock09} are actually exact.

The following theorem essentially summarizes the results obtained in \cite{StojnicCSetamBlock09,StojnicUpperBlock10} and effectively establishes for any $0<\alpha\leq 1$ the exact value of $\beta_w$ for which (\ref{eq:l2l1}) finds the $k$-block-sparse $\x$ from (\ref{eq:system}).

\begin{theorem}(\cite{StojnicCSetamBlock09,StojnicUpperBlock10} Exact weak threshold; block-sparse $\x$)
Let $A$ be an $M\times N$ matrix in (\ref{eq:system})
with i.i.d. standard normal components. Let
the unknown $\x$ in (\ref{eq:system}) be $k$-block-sparse with the length of its blocks $d$. Further, let the location and the directions of nonzero blocks of $\x$ be arbitrarily chosen but fixed.
Let $k,m,n$ be large
and let $\alpha=\frac{m}{n}$ and $\betaweak=\frac{k}{n}$ be constants
independent of $m$ and $n$. Let $\gammainc(\cdot,\cdot)$ and $\gammainc^{-1}(\cdot,\cdot)$ be the incomplete gamma function and its inverse, respectively. Further,
let all $\epsilon$'s below be arbitrarily small constants.

\begin{enumerate}
\item Let $\hthetaweak$, ($\betaweak\leq \hthetaweak\leq 1$) be the solution of
\begin{equation}
(1-\epsilon_1^{(c)})(1-\betaweak)\frac{\frac{\sqrt{2}\Gamma(\frac{d+1}{2})}{\Gamma(\frac{d}{2})}
\left (1-\gammainc(\gammainc^{-1}(\frac{1-\thetaweak}{1-\betaweak},\frac{d}{2}),\frac{d+1}{2})\right )}{\thetaweak}-\sqrt{2\gammainc^{-1}(\frac{(1+\epsilon_1^{(c)})(1-\thetaweak)}{1-\betaweak},\frac{d}{2})}=0
.\label{eq:thmweaktheta}
\end{equation}
If $\alpha$ and $\betaweak$ further satisfy
\begin{multline}
\alpha d>(1-\betaweak)\frac{2\Gamma(\frac{d+2}{2})}{\Gamma(\frac{d}{2})}
\left (1-\gammainc(\gammainc^{-1}(\frac{1-\hthetaweak}{1-\betaweak},\frac{d}{2}),\frac{d+2}{2})\right )
+\betaweak d\\-\frac{\left ((1-\betaweak)\frac{\sqrt{2}\Gamma(\frac{d+1}{2})}{\Gamma(\frac{d}{2})}
(1-\gammainc(\gammainc^{-1}(\frac{1-\hthetaweak}{1-\betaweak},\frac{d}{2}),\frac{d+1}{2}))\right ) ^2}{\hthetaweak}\label{eq:thmweakalpha}
\end{multline}
then with overwhelming probability the solution of (\ref{eq:l2l1}) is the $k$-block-sparse $\x$ from (\ref{eq:system}).

\item Let $\htheta_w$, ($\beta_w\leq \htheta_w\leq 1$) be the solution of
\begin{equation}
(1+\epsilon_2^{(c)})(1-\betaweak)\frac{\frac{\sqrt{2}\Gamma(\frac{d+1}{2})}{\Gamma(\frac{d}{2})}
\left (1-\gammainc(\gammainc^{-1}(\frac{1-\thetaweak}{1-\betaweak},\frac{d}{2}),\frac{d+1}{2})\right )}{\thetaweak}-\sqrt{2\gammainc^{-1}(\frac{(1-\epsilon_2^{(c)})(1-\thetaweak)}{1-\betaweak},\frac{d}{2})}=0
.\label{eq:thmweaktheta1}
\end{equation}
If $\alpha$ and $\betaweak$ further satisfy
\begin{multline}
\alpha d<\frac{1}{(1+\epsilon_1^{(m)})^2}((1-\epsilon_1^{(g)}) (1-\betaweak)\frac{2\Gamma(\frac{d+2}{2})}{\Gamma(\frac{d}{2})}
\left (1-\gammainc(\gammainc^{-1}(\frac{1-\hthetaweak}{1-\betaweak},\frac{d}{2}),\frac{d+2}{2})\right )
+\betaweak d\\-\frac{\left ((1-\betaweak)\frac{\sqrt{2}\Gamma(\frac{d+1}{2})}{\Gamma(\frac{d}{2})}
(1-\gammainc(\gammainc^{-1}(\frac{1-\hthetaweak}{1-\betaweak},\frac{d}{2}),\frac{d+1}{2}))\right ) ^2}{\hthetaweak (1+\epsilon_3^{(g)})^{-2}} )\label{eq:thmweakalpha1}
\end{multline}
then with overwhelming probability there will be a $k$-block-sparse $\x$ (from a set of $\x$'s with fixed locations and directions of nonzero blocks) that satisfies (\ref{eq:system}) and is \textbf{not} the solution of (\ref{eq:l2l1}).
\end{enumerate}
\label{thm:thmweakthrblock}
\end{theorem}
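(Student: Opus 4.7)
The plan is to establish both parts of Theorem~\ref{thm:thmweakthrblock} by combining the null-space characterization of $\ell_2/\ell_1$ recovery with the probabilistic framework developed in \cite{StojnicCSetam09} and extended to the block setting in \cite{StojnicCSetamBlock09}. For the achievability direction (part~1), the starting point is the standard equivalence: the minimizer of (\ref{eq:l2l1}) equals the true block-sparse $\tilde{\x}$ for every such $\tilde{\x}$ supported on a fixed block-set $S$ of cardinality $k$ if and only if every nonzero $\w$ in the null-space of $A$ satisfies $\sum_{i\in S}\|\w_{(i-1)d+1:id}\|_2 < \sum_{i\in\bar{S}}\|\w_{(i-1)d+1:id}\|_2$. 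A sufficient condition for this inequality to hold uniformly with overwhelming probability is that an appropriate Gaussian min-max process, indexed by the unit sphere intersected with a block-$\ell_2/\ell_1$ descent cone, is bounded away from zero. This is precisely the setting in which Gordon's escape-through-a-mesh theorem (equivalently, the Gaussian min-max comparison) can be invoked.

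Applying Gordon's inequality to our block process reduces the task to comparing expectations of norms of two independent standard Gaussian vectors, one of dimension $M$ and one of dimension $N$, on the relevant constraint sets. Because the constraint is stated blockwise, the functional one has to evaluate is the expectation of $(\|\g\|_2-\tau)_+$ with $\g$ a standard normal vector in $\mathbb{R}^d$. This is a one-dimensional integral against the $\chi_d$ density and yields, after a change of variables, exactly the combination of $\Gamma(d/2)$, $\Gamma((d+1)/2)$, $\Gamma((d+2)/2)$, and the incomplete gamma function $\gammainc$ that appears in (\ref{eq:thmweaktheta}) and (\ref{eq:thmweakalpha}). The parameter $\thetaweak$ enters as the auxiliary Lagrange-type multiplier that balances the block masses on $\bar{S}$ against those on $S$; the stationarity condition in $\thetaweak$ is (\ref{eq:thmweaktheta}), and substituting the optimizer $\hthetaweak$ back into the Gordon bound produces the threshold inequality (\ref{eq:thmweakalpha}). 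Lipschitz concentration of Gaussian functionals upgrades the mean-level inequality to an overwhelming-probability statement, with the arbitrarily small constants $\epsilon_1^{(c)}$ absorbing exponentially negligible deviations.

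For the converse in part~2, the same Gaussian min-max process has to be shown to be negative with overwhelming probability, so that a null-space witness violating the block characterization actually exists. This direction requires going beyond Gordon's one-sided comparison. Following \cite{StojnicUpperBlock10,StojnicUpper10,StojnicEquiv10}, one establishes a matching upper bound on the Gaussian min-max expression, essentially proving tightness of Gordon's inequality in this restricted high-dimensional regime; reversing the argument of part~1 then produces, with overwhelming probability, a $\w$ in the null-space of $A$ that defeats the block null-space condition. The slightly different constants $\epsilon_2^{(c)}$ and the factors $(1+\epsilon_1^{(m)})^{-2}$ and $(1+\epsilon_3^{(g)})^{-2}$ in (\ref{eq:thmweakalpha1}) arise from the reverse concentration estimates and from handling the boundary behavior of the optimization in $\thetaweak$.

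The main obstacle is twofold. Analytically, the heart of the calculation is the evaluation of the truncated moments of the $\chi_d$ distribution, which is what produces the incomplete gamma functions and makes the formulas substantially heavier than in the scalar case $d=1$ of Theorem~\ref{thm:thmweakthr}; without careful bookkeeping of $\Gamma((d+j)/2)$ for $j=0,1,2$ one cannot reach the clean form of (\ref{eq:thmweaktheta})--(\ref{eq:thmweakalpha}). Technically, the more delicate issue is establishing tightness of the Gordon bound needed for part~2: Gordon's inequality is only a one-sided comparison, so matching the two directions along a common saddle point in $\thetaweak$, and showing that the gap between the Gordon lower bound and the matching upper bound vanishes in the linear regime $n\to\infty$, is where the full strength of the framework of \cite{StojnicUpperBlock10,StojnicEquiv10} is needed.
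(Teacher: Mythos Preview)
Your outline follows the same architecture as the cited works \cite{StojnicCSetamBlock09,StojnicUpperBlock10}: null-space characterization, Gordon's escape-through-a-mesh theorem, evaluation of the Gaussian width via truncated $\chi_d$ moments with a Lagrange parameter $\thetaweak$, concentration, and a matching upper bound for the converse. In that sense the approach is the right one and coincides with the paper's.

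There is, however, one concrete slip in your starting point. The null-space condition you write,
\[
\sum_{i\in S}\|\W_i\|_2 < \sum_{i\in\bar S}\|\W_i\|_2 \quad\text{for all nonzero }\w\in\ker A,
\]
characterizes recovery of \emph{every} $\tilde\x$ with block support $S$ and arbitrary block directions. That is the \emph{sectional} threshold condition, not the \emph{weak} one. Theorem~\ref{thm:thmweakthrblock} fixes both the support \emph{and} the directions $\X_i/\|\X_i\|_2$ of the nonzero blocks, so the correct characterization (cf.\ (\ref{eq:thmeqgenweak1block}) in the paper) is
\[
-\sum_{i\in S}\frac{\X_i^T\W_i}{\|\X_i\|_2} < \sum_{i\in\bar S}\|\W_i\|_2.
\]
After a block-by-block rotation $\Theta_i$ sending $\X_i$ to $\|\X_i\|_2\e_1$ (which leaves the Gaussian null-space invariant in law), the left side becomes $-\sum_{i\in S}\w_{(i-1)d+1}$, i.e.\ a single coordinate per on-support block rather than the full $\ell_2$ norm. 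This is exactly why, in the width computation, each on-support block contributes one scalar Gaussian $\h_{(i-1)d+1}$ together with an independent $\chi_{d-1}$ piece $\|\H_i^*\|_2$, rather than a single $\chi_d$. If you run the calculation from your stated condition you will not land on (\ref{eq:thmweaktheta})--(\ref{eq:thmweakalpha}); the on-support first-moment term in the stationarity equation would acquire a nonzero $E\|\H_i\|_2$ contribution that should not be there. Once you replace the sectional condition by the direction-fixed one and perform the rotation, the rest of your sketch goes through as written.
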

\begin{proof}
The first part was established in \cite{StojnicCSetamBlock09}. The second part was established in \cite{StojnicUpperBlock10}.
\end{proof}

As we mentioned above, the block-sparse structure is clearly not the only one that can be imposed on the unknown $\x$. Among the most typical ones usually considered in parallel with the standard sparse scenario is what is called the case of signed (say positive-nonnegative/negative-nonpositive) vectors $\x$. Such vectors can find applications in scenarios when the physics of the problem does not allow for different signs of the components of $\x$. Also, they have been of interest for a long time from a purely mathematical point of view (see, e.g. \cite{DonohoSigned,DT,StojnicCSetam09,StojnicUpper10,StojnicICASSP09,StojnicICASSP10var}). In scenarios when it is known that all components of $\x$ are of the same sign (say positive) one typically replaces (\ref{eq:l1}) with the following
\begin{eqnarray}
\mbox{min} & & \|\x\|_{1}\nonumber \\
\mbox{subject to} & & A\x=\y\nonumber \\
& & \x_i\geq 0. \label{eq:l1non}
\end{eqnarray}
As mentioned above, in \cite{StojnicCSetam09,StojnicUpper10} we showed how one characterize the performance of (\ref{eq:l1non}). In this paper we look at the above mentioned block-sparse vectors that are also signed (say positive). In such scenarios one can use the following modification of (\ref{eq:l2l1})
\begin{eqnarray}
\mbox{min} & & \sum_{i=1}^{n}\|\x_{(i-1)d+1:id}\|_2\nonumber \\
\mbox{subject to} & & A\x=\y\nonumber \\
& & \x_i\geq 0. \label{eq:l2l1non}
\end{eqnarray}
Of course, (\ref{eq:l2l1non}) relates to (\ref{eq:l2l1}) in the same way (\ref{eq:l1non}) relates to (\ref{eq:l1}). As analysis of \cite{StojnicCSetam09,StojnicUpper10} showed one can achieve a higher recoverable sparsity when components of $\x$ are a priori known to be say positive (i.e. of the same sign). This of course is not surprising since one would naturally expect that the more available information about unknown $\x$ the easier its recovery.
Along the same lines, one can then expect that a similar phenomenon should occur when one is dealing with the block-sparse signals. Among other things, the analysis that we will present below will confirm such an expectation. Of course, the key feature of what will present below will be a precise performance analysis of (\ref{eq:l2l1non}) when used for recovery of positive block-sparse vectors $\x$ in (\ref{eq:system}).

\section{Performance analysis of (\ref{eq:l2l1non})}
\label{sec:analblsppos}

In this section we will attempt to obtain the results qualitatively similar to those presented in Theorems \ref{thm:thmweakthr} and \ref{thm:thmweakthrblock}. Of course, the results presented in Theorems \ref{thm:thmweakthr} and \ref{thm:thmweakthrblock} are related to performance of (\ref{eq:l1}) and (\ref{eq:l2l1}), respectively, whereas here we will try to create their an analogue that relates to (\ref{eq:l2l1non}). As mentioned earlier, the results presented in Theorems \ref{thm:thmweakthr} and \ref{thm:thmweakthrblock} were obtained in a series of work \cite{StojnicICASSP09,StojnicCSetam09,StojnicUpper10,StojnicCSetamBlock09,StojnicUpperBlock10}. Below, we adapt some of these results so that they can handle the problems of interest here. In doing so, we will in this and all subsequent sections assume a substantial level of familiarity with many of the well-known results that relate to the performance characterization of (\ref{eq:l1}) and (\ref{eq:l2l1}) (we will fairly often recall on many results/definitions that we established in \cite{StojnicICASSP09,StojnicCSetam09,StojnicUpper10,StojnicCSetamBlock09,StojnicUpperBlock10}).

Before proceeding further with a detail presentation we briefly sketch what specifically we will be interested in showing below. Namely, using the analysis of \cite{StojnicCSetam09,StojnicUpper10,StojnicCSetamBlock09,StojnicUpperBlock10} mentioned earlier, for a specific group of randomly generated matrices $A$, one can determine values $\beta_w$ for the entire range of $\alpha$, i.e. for $0\leq \alpha\leq 1$, where $\beta_{w}$ is the
maximum allowable value of $\beta$ such that
(\ref{eq:l2l1non}) finds the positive $k$-block-sparse solution of (\ref{eq:system}) with overwhelming
probability for \emph{any} $k$-block-sparse $\x$ with given a fixed location and a fixed combination of directions of nonzero blocks, and \emph{a priori} known to have non-negative components. (As mentioned earlier and discussed to great extent in \cite{StojnicICASSP10var,StojnicICASSP09,StojnicCSetam09,StojnicUpper10,StojnicCSetamBlock09,StojnicUpperBlock10}, this value of $\beta_w$ is often referred to as the \emph{weak} threshold.)

We are now ready to start the analysis. We begin by recalling on a theorem from \cite{StojnicUpperBlock10} that provides a characterization as to when the solution of (\ref{eq:l2l1non}) is $\tilde{\x}$, i.e. the positive $k$-block-sparse solution of (\ref{eq:system}) or (\ref{eq:l0}). Since the analysis will clearly be irrelevant with respect to what particular location and what particular combination of directions of nonzero
blocks are chosen, we can for the simplicity of the exposition and without a loss of generality assume that the
blocks $\X_1,\X_2,\dots,\X_{n-k}$ of $\x$ are equal to zero and the blocks $\X_{n-k+1},\X_{n-k+2},\dots,\X_n$ of $\X$ have fixed
directions. Also, as mentioned above, we will assume that all components of $\x$ are non-negative, i.e. $\x_i\geq 0, 0\leq i\leq N$. Under these assumptions we have the following theorem (similar characterizations adopted in various contexts can be found in
\cite{DH01,XHapp,SPH,StojnicICASSP09,GN03,StojnicCSetamBlock09,StojnicUpperBlock10,StojnicICASSP10block,StojnicICASSP09block}).
\begin{theorem}(\cite{StojnicCSetamBlock09,StojnicUpperBlock10,StojnicICASSP10block} Nonzero part of $\x$ has fixed directions and location)
Assume that an $dm\times dn$ matrix $A$ is given. Let $\x$
be a positive $k$-block-sparse vector from $R^{dn}$. Also let $\X_1=\X_2=\dots=\X_{n-k}=0$ and let the directions of vectors $\X_{n-k+1},\X_{n-k+2},\dots,\X_n$ be fixed.
Further, assume that $\y=A\x=\sum_{i=1}^n A_i\X_i$ and that $\w$ is
an $dn\times 1$ vector with blocks $\W_i,i=1,\dots,n$, defined in a way analogous to the definition of blocks $\X_i$. If
\begin{equation}
(\forall \w\in \textbf{R}^{dn} | A\w=0, \w_i\geq 0, 1\leq i\leq d(n-k)) \quad  -\sum_{i=n-k+1}^n \frac{\X_i^T\W_i}{\|\X_i\|_2}<\sum_{i=1}^{n-k}\|\W_{i}\|_2.
\label{eq:thmeqgenweak1block}
\end{equation}
then the solution of (\ref{eq:l2l1}) is $\x$. Moreover, if
\begin{equation}
(\exists \w\in \textbf{R}^{dn} | A\w=0, \w_i\geq 0, 1\leq i\leq d(n-k)) \quad  -\sum_{i=n-k+1}^n \frac{\X_i^T\W_i}{\|\X_i\|_2}>\sum_{i=1}^{n-k}\|\W_{i}\|_2.
\label{eq:thmeqgenweak2block}
\end{equation}
then there will be a positive $k$-block-sparse $\x$ from the above defined set that satisfies (\ref{eq:system}) and is not the solution of (\ref{eq:l2l1non}).
\label{thm:thmgenweakblock}
\end{theorem}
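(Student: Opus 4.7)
The plan is to prove Theorem \ref{thm:thmgenweakblock} via a standard null-space/subgradient characterization, treating the sufficiency (\ref{eq:thmeqgenweak1block}) and necessity (\ref{eq:thmeqgenweak2block}) directions separately. The key observation is that any feasible competitor of $\x$ in (\ref{eq:l2l1non}) can be written as $\x+\w$ where $A\w=0$ (from $A(\x+\w)=\y=A\x$) and $(\x+\w)_i\geq 0$ for all $i$; because $\x$ vanishes on the first $d(n-k)$ coordinates, non-negativity forces $\w_i\geq 0$ there, which is exactly the restriction appearing in the hypotheses. So the optimization problem reduces to comparing the $\ell_2/\ell_1$ objective at $\x$ and at $\x+\w$ over this restricted family of null-space directions.

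For the sufficient direction, I would estimate block by block. For $i\leq n-k$, $\|\X_i+\W_i\|_2=\|\W_i\|_2$ since $\X_i=0$. For $i>n-k$, I would apply the Cauchy--Schwarz lower bound
\begin{equation*}
\|\X_i+\W_i\|_2\;\geq\;\frac{\X_i^T(\X_i+\W_i)}{\|\X_i\|_2}\;=\;\|\X_i\|_2+\frac{\X_i^T\W_i}{\|\X_i\|_2}.
\end{equation*}
Summing these inequalities and subtracting $\sum_i\|\X_i\|_2$ gives
\begin{equation*}
\sum_{i=1}^n\|\X_i+\W_i\|_2-\sum_{i=1}^n\|\X_i\|_2\;\geq\;\sum_{i=1}^{n-k}\|\W_i\|_2+\sum_{i=n-k+1}^n\frac{\X_i^T\W_i}{\|\X_i\|_2},
\end{equation*}
and the hypothesis (\ref{eq:thmeqgenweak1block}) is exactly the statement that the right-hand side is strictly positive for every admissible nonzero $\w$. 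Hence $\x$ strictly improves on every competitor, so it is the unique minimizer of (\ref{eq:l2l1non}). A small subtlety I would flag is the case where $\W_i=0$ on the zero blocks but not on the nonzero ones; the hypothesis still gives strict inequality there, because the right-hand side reduces to $\sum_{i>n-k}\X_i^T\W_i/\|\X_i\|_2>0$.

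For the necessary direction, let $\w$ be a witness for (\ref{eq:thmeqgenweak2block}). I would then exhibit a concrete positive $k$-block-sparse $\x$ from the prescribed family that is beaten by some feasible vector. The construction is to take $\X_1=\cdots=\X_{n-k}=0$ and $\X_i=t\hat{\X}_i$ for $i>n-k$, where $\hat{\X}_i$ is the prescribed unit direction and $t>0$ is chosen large enough that $\x+\epsilon\w$ has non-negative entries on the last $k$ blocks for some $\epsilon>0$; on the first $n-k$ blocks non-negativity is automatic from $\w_i\geq 0$. Because each $\|\cdot\|_2$ is smooth away from zero, a first-order expansion gives
\begin{equation*}
\sum_{i=1}^n\|\X_i+\epsilon\W_i\|_2-\sum_{i=1}^n\|\X_i\|_2=\epsilon\Bigl(\sum_{i=1}^{n-k}\|\W_i\|_2+\sum_{i=n-k+1}^n\frac{\X_i^T\W_i}{\|\X_i\|_2}\Bigr)+o(\epsilon),
\end{equation*}
which is strictly negative for sufficiently small $\epsilon$ by (\ref{eq:thmeqgenweak2block}). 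Thus $\x+\epsilon\w$ is a feasible point of (\ref{eq:l2l1non}) with a strictly smaller objective, and $\x$ fails to be the solution.

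The main obstacle is really only a bookkeeping one: ensuring that the necessity witness $\x$ lies genuinely in the prescribed family (same support pattern and same block directions), and that strictness of the sufficiency inequality is preserved in the corner case where $\w$ is supported entirely on the nonzero blocks. Both are handled by the construction and case split sketched above; no measure-concentration or probabilistic machinery enters at this stage, since the theorem is a deterministic null-space characterization that will later be composed with the escape-through-a-mesh arguments of \cite{StojnicCSetamBlock09,StojnicUpperBlock10} to obtain the quantitative thresholds.
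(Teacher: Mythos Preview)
Your argument is correct and is precisely the standard null-space/subgradient characterization that the paper's cited references \cite{StojnicCSetamBlock09,StojnicUpperBlock10} carry out; the paper itself gives no independent proof here, only pointers, so your write-up is in fact more complete than what appears in the text. The sufficiency direction via the Cauchy--Schwarz lower bound and the necessity direction via a first-order expansion along the witness $\w$ are exactly the two ingredients those references use.

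One small point worth tightening in the necessity direction: your feasibility argument for $\x+\epsilon\w$ on the last $k$ blocks relies on every coordinate of each prescribed direction $\hat{\X}_i$ being strictly positive, so that scaling $t$ large (or $\epsilon$ small) forces $t(\hat{\X}_i)_j+\epsilon\w_{(i-1)d+j}\geq 0$. If some $(\hat{\X}_i)_j=0$, no choice of $t$ rescues a negative $\w_{(i-1)d+j}$ at that coordinate. This is harmless for the paper's purposes---the subsequent Gaussian analysis is invariant under the orthogonal maps $\Theta_i$, so one may take the directions in the interior of the orthant without loss---but it is worth stating the assumption explicitly rather than leaving it implicit. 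Also note that the theorem as printed references (\ref{eq:l2l1}) in the first conclusion; your reading of it as (\ref{eq:l2l1non}) is the intended one, since the restriction $\w_i\geq 0$ on the zero blocks in the hypothesis only arises from the non-negativity constraint.
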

\begin{proof}
The first part follows directly from Corollary $2$ in \cite{StojnicCSetamBlock09}. The second part was considered in \cite{StojnicUpperBlock10} and it follows by combining (adjusting to the block case) the first part and the ideas of the second part of Theorem $1$ (Theorem $4$) in \cite{StojnicUpper10}.
\end{proof}
Having matrix $A$ such that
(\ref{eq:thmeqgenweak1block}) holds would be enough for solutions of (\ref{eq:l2l1non}) and (\ref{eq:l0}) (or (\ref{eq:system})) to coincide. If one
assumes that $m$ and $k$ are proportional to $n$ (the case of our
interest in this paper) then the construction of the deterministic
matrices $A$ that would satisfy
(\ref{eq:thmeqgenweak1block}) is not an easy task (in fact, one may say that together with the ones that correspond to the standard $\ell_1$ it is one of the most fundamental open problems in the area of theoretical compressed sensing). However, turning to
random matrices significantly simplifies things. That is the route that will pursuit below. In fact to be a bit more specific, we will assume that the elements of matrix $A$ are i.i.d. standard normal random variables. All results that we will present below will hold for many other types of randomness (we will comment on this in more detail in Section \ref{sec:conc}). However, to make the presentation as smooth as possible we assume the standard Gaussian scenario.

We then follow the strategy of \cite{StojnicCSetam09,StojnicCSetamBlock09}. To that end we will make use of the following theorem:
\begin{theorem}(\cite{Gordon88} Escape through a mesh)
\label{thm:Gordonmesh} Let $S$ be a subset of the unit Euclidean
sphere $S^{dn-1}$ in $R^{dn}$. Let $Y$ be a random
$d(n-m)$-dimensional subspace of $R^{dn}$, distributed uniformly in
the Grassmanian with respect to the Haar measure. Let
\begin{equation}
w(S)=E\sup_{\w\in S} (\h^T\w) \label{eq:widthdef}
\end{equation}
where $\h$ is a random column vector in $R^{dn}$ with i.i.d. ${\cal
N}(0,1)$ components. Assume that
$w(S)<\left ( \sqrt{dm}-\frac{1}{4\sqrt{dm}}\right )$. Then
\begin{equation}
P(Y\cap S= \emptyset )>1-3.5e^{-\frac{\left (\sqrt{dm}-\frac{1}{4\sqrt{dm}}-w(S) \right ) ^2}{18}}.
\label{eq:thmesh}
\end{equation}
\end{theorem}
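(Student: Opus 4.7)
The plan is to reduce the geometric statement $Y \cap S = \emptyset$ to a quantitative lower bound on the minimum singular value (restricted to $S$) of a Gaussian matrix, and then to control that quantity by a Slepian/Gordon comparison of Gaussian processes. Concretely, realize the random subspace $Y$ as the kernel of an $dm \times dn$ matrix $A$ with i.i.d. $\N(0,1)$ entries; by rotational invariance this produces the correct uniform law on the Grassmannian. Then $Y \cap S \neq \emptyset$ iff there exists $\w \in S$ with $A\w = 0$, which is equivalent to $\min_{\w \in S} \|A\w\|_2 = 0$. Writing $\|A\w\|_2 = \max_{\u \in S^{dm-1}} \u^T A\w$, the problem becomes to produce a high-probability lower bound on
\begin{equation*}
M(A) \;=\; \min_{\w \in S}\, \max_{\u \in S^{dm-1}} \u^T A \w.
\end{equation*}

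The main tool is Gordon's min-max comparison inequality applied to the two centered Gaussian processes
\begin{equation*}
X_{\w,\u} = \u^T A \w, \qquad Y_{\w,\u} = \|\w\|_2\, \g^T \u + \|\u\|_2\, \h^T \w,
\end{equation*}
indexed by $(\w,\u) \in S \times S^{dm-1}$, with $\g \in \R^{dm}$, $\h \in \R^{dn}$ independent standard Gaussians. A direct covariance computation shows the hypotheses of Gordon's theorem (increments of $Y$ dominate those of $X$ across the "$\w$-index" while matching when $\w = \w'$), which yields
\begin{equation*}
\Exp \min_{\w\in S} \max_{\u \in S^{dm-1}} X_{\w,\u} \;\geq\; \Exp \min_{\w \in S} \max_{\u \in S^{dm-1}} Y_{\w,\u}.
\end{equation*}
On $S \subset S^{dn-1}$ the factor $\|\w\|_2 = 1$, so the right-hand side separates: the inner maximization over $\u$ on the unit sphere gives $\|\g\|_2$ (independent of $\w$), and the $\h^T \w$ term splits off, producing the lower bound $\Exp\|\g\|_2 - \Exp\sup_{\w \in S}(\h^T \w) = \Exp\|\g\|_2 - w(S)$. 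Standard estimates give $\Exp \|\g\|_2 \geq \sqrt{dm} - \frac{1}{4\sqrt{dm}}$ (from, e.g., a chi-distribution mean bound), so that the mean of $M(A)$ is at least $\sqrt{dm} - \frac{1}{4\sqrt{dm}} - w(S)$, which is strictly positive by hypothesis.

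To upgrade from an expectation bound to a probability bound, use concentration of Lipschitz functions of a Gaussian vector. The map $A \mapsto M(A)$ is $1$-Lipschitz in the Frobenius norm (minimum of maxima of linear functionals of unit length), so by the Borell-TIS inequality
\begin{equation*}
\Prob\!\left( M(A) \leq \Exp M(A) - t \right) \;\leq\; e^{-t^2/2}.
\end{equation*}
Choosing $t$ equal to the slack $\sqrt{dm} - \frac{1}{4\sqrt{dm}} - w(S)$ and combining with the mean bound yields a tail probability of the form $e^{-(\sqrt{dm}-\frac{1}{4\sqrt{dm}}-w(S))^2/2}$; a slightly more careful two-sided application of Gordon's inequality (or a sharper version of the comparison that also controls the distribution function rather than just the mean) improves the constant in the exponent from $\frac{1}{2}$ to $\frac{1}{18}$ and picks up the constant $3.5$ in the prefactor, giving exactly the advertised bound \eqref{eq:thmesh}.

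The genuinely hard step is the Gordon comparison inequality itself: verifying the process-covariance hypotheses is routine, but turning the mean comparison into the sharp distributional comparison (with the constant $1/18$) requires the full strength of Gordon's theorem, not merely Slepian's lemma, since we need control of a min-max rather than a max. Everything else (the kernel representation of $Y$, the Lipschitz concentration, and the $\Exp\|\g\|_2$ bound) is standard; the subtlety is entirely packaged into invoking Gordon's inequality with the right index set $S \times S^{dm-1}$ and verifying that the auxiliary process $Y_{\w,\u}$ separates in a way that isolates the width $w(S)$.
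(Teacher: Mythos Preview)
The paper does not prove this theorem; it is quoted from \cite{Gordon88} and used as a black-box tool, so there is no in-paper proof to compare against.

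Your sketch is essentially the standard argument and is correct in its architecture: realize $Y$ as $\ker A$ for a Gaussian $A$, rewrite $Y\cap S=\emptyset$ as $M(A)=\min_{\w\in S}\|A\w\|_2>0$, compare the bilinear Gaussian process $\u^TA\w$ with the separable process $\g^T\u+\h^T\w$ via Gordon's min--max inequality to obtain $\Exp M(A)\ge \Exp\|\g\|_2-w(S)$, and then use that $A\mapsto M(A)$ is $1$-Lipschitz in the Frobenius norm to concentrate.

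The one genuine confusion is your last paragraph. The Lipschitz concentration step already yields
\[
\Prob\bigl(M(A)\le 0\bigr)\;\le\;\Prob\bigl(M(A)\le \Exp M(A)-\delta\bigr)\;\le\;e^{-\delta^2/2},
\qquad \delta=\sqrt{dm}-\tfrac{1}{4\sqrt{dm}}-w(S),
\]
which is \emph{stronger} than the stated bound $3.5\,e^{-\delta^2/18}$. Passing from exponent $1/2$ to $1/18$ is a weakening, not an improvement, and no further ``more careful'' application of Gordon is needed: your inequality already implies \eqref{eq:thmesh} trivially. The constants $3.5$ and $1/18$ are artifacts of Gordon's original route (which works at the level of distribution functions rather than expectation-plus-concentration); you should simply remark that your bound dominates the claimed one and stop there.
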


As mentioned above, to make use of Theorem \ref{thm:Gordonmesh} we follow the strategy presented in \cite{StojnicCSetam09,StojnicCSetamBlock09}. We start by defining a set $Sw'$
\begin{equation}
\Sw'=\{\w\in S^{dn-1}| \quad -\sum_{i=n-k+1}^n \frac{\X_i^T\W_i}{\|\X_i\|_2}\geq\sum_{i=1}^{n-k}\|\W_{i}\|_2\}\label{eq:defSwpr}
\end{equation}
and
\begin{equation}
w(\Sw')=E\sup_{\w\in \Sw'} (\h^T\w) \label{eq:widthdefSwpr}
\end{equation}
where as earlier $\h$ is a random column vector in $R^{dn}$ with i.i.d. ${\cal
N}(0,1)$ components and $S^{dn-1}$ is the unit $dn$-dimensional sphere. Let $\H_i = (\h_{(i-1)d + 1}, \h_{(i-1)d + 2}, \ldots, \h_{id})^T$,
$i = 1, 2, \ldots, n$ and let $\Theta_i$ be the orthogonal matrices such that $\X_i^T\Theta_i=(\|\X_i\|_2,0,\dots,0),n-k+1\leq i\leq n$. Set
\begin{equation}
\Sw=\{\w\in S^{dn-1}| \quad -\sum_{i=n-k+1}^n \w_{(i-1)d+1}\geq\sum_{i=1}^{n-k}\|\W_{i}\|_2\}\label{eq:defSw}
\end{equation}
and
\begin{equation}
w(\Sw)=E\sup_{\w\in \Sw} (\h^T\w). \label{eq:widthdefSw}
\end{equation}
Since $\H_i^T$ and $\H_i^T\Theta_i$ have the same distribution we have $w(\Sw)=w(\Sw')$.  The strategy of \cite{StojnicCSetam09,StojnicCSetamBlock09} assumes roughly the following: if $w(\Sw)< \sqrt{dm}-\frac{1}{4\sqrt{dm}}$ is positive with overwhelming probability for certain combination of $k$, $m$, and $n$ then for $\alpha=\frac{m}{n}$ one has a lower bound $\beta_{w}=\frac{k}{n}$ on the true value of the \emph{weak} threshold with overwhelming probability (we recall that as usual under overwhelming probability we of course assume a probability that is no more than a number exponentially decaying in $n$ away from $1$). The above basically means that if one can handle $w(\Sw)$ then, when $n$ is large one can, roughly speaking, use the condition $w(\Sw)< \sqrt{m}$ to obtain an attainable lower bound $\beta_{w}$ for any given $0<\alpha\leq 1$.

To that end we then look at
\begin{equation}
w(S_{w}^{(p)})=E\max_{\w\in S_{w}} (\h^T\w), \label{eq:widthdefswp}
\end{equation}
where to make writing simpler we have replaced the $\sup$ from (\ref{eq:widthdefSw}) with a $\max$. Let
\begin{eqnarray}
\H_i^* & = & (\h_{(i-1)d + 2}, \h_{(i-1)d + 3}, \ldots, \h_{id})^T,i = n-k+1, 2, \ldots, n \nonumber \\
\W_i^* & = & (\w_{(i-1)d + 2}, \w_{(i-1)d + 3}, \ldots, \w_{id})^T,i = n-k+1, 2, \ldots, n.\label{eq:defHstWst}
\end{eqnarray}
Also set
\begin{equation*}
\H_i^+ = (\max(\h_{(i-1)d + 1},0), \max(\h_{(i-1)d + 2},0), \ldots, \max(\h_{id},0))^T,1\leq i\leq n-k.
\end{equation*}
Following further what was done in \cite{StojnicCSetam09,StojnicCSetamBlock09} one then can write
\begin{equation}
w(\Sw)=E\max_{\w\in \Sw} (\h^T\w)=\max_{\w\in \Sw} (\sum_{i=n-k+1}^n \h_{(i-1)d+1}\w_{(i-1)d+1}+
\sum_{i=n-k+1}^n\|\H_i^*\|_2\|\W_i^*\|_2+\sum_{i=1}^{n-k}\|\H_i^+\|_2\|\W_i\|_2).\label{eq:workww0}
\end{equation}
Set $\Hnorm^{(n-k,+)}=(\|\H_1^+\|_2,\|\H_2^+\|_2,\dots,\|\H_{n-k}^+\|_2)$ and let $|\Hnorm^{(n-k,+)}|_{(i)}$ be the $i$-th element in the sequence of elements of
$\Hnorm^{(n-k,+)}$ sorted in increasing order. Set
\begin{multline}
\hw^+=(|\Hnorm^{(n-k,+)}|_{(1)},|\Hnorm^{(n-k,+)}|_{(2)},\dots,|\Hnorm^{(n-k,+)}|_{(n-k)},
-\h_{(n-k+1)d+1},-\h_{(n-k+2)d+1},\dots,-\h_{(n-1)d+1},\\
\|\H_{n-k+1}^*\|_2,\|\H_{n-k+2}^*\|_2,\dots,\|\H_{n}^*\|_2)^T.\label{eq:defhweak}
\end{multline}
Let $\bar{\y}=(\y_1,\y_2,\dots,\y_{n+k})^T\in R^{n+k}$. Then one can simplify (\ref{eq:workww0}) in the following way
\begin{eqnarray}
w(\Sw) = \max_{\bar{\t}\in R^{n+k}} & &  \sum_{i=1}^{n+k} \hw_i^+ \bar{\t}_i\nonumber \\
\mbox{subject to} &  & \bar{\t}_i\geq 0, 0\leq i\leq n-k,n+1\leq i\leq n+k\nonumber \\
& & \sum_{i=n-k+1}^n\bar{\t}_i\geq \sum_{i=1}^{n-k} \bar{\t}_i \nonumber \\
& & \sum_{i=1}^{n+k}\bar{\t}_i^2\leq 1\label{eq:workww2}
\end{eqnarray}
where $\hw_i^+$ is the $i$-th element of $\hw^+$. Let $\zw\in R^{n+k}$ be a vector such that $\zw_i=1,1\leq i\leq n-k$, $\zw_i=-1,n-k+1\leq i\leq n$, and $\zw_i=0,n+1\leq i\leq n+k$.

Following step by step the derivation in \cite{StojnicCSetam09,StojnicCSetamBlock09} one has, based on the Lagrange duality theory, that there is a $\cweak=(1-\theta_w)n\leq (n-k)$ such that
\begin{eqnarray}
\hspace{-.5in}\lim_{n\rightarrow \infty} \frac{w(\Sw)}{\sqrt{n}}=\lim_{n\rightarrow \infty} \frac{E\max_{\w\in S_{w}} (\h^T\w)}{\sqrt{n}}
& \approxeq &
\sqrt{\lim_{n\rightarrow \infty} \frac{E\sum_{i=\cweak+1}^{n+k}(\hw_i^+)^2}{n}-\frac{(\lim_{n\rightarrow \infty} \frac{E((\hw^+)^T\z)-E\sum_{i=1}^{\cweak}\hw_i^+}{n})^2}{1-\lim_{n\rightarrow \infty}\frac{\cweak}{n}}}\nonumber \\
& = &
\sqrt{\lim_{n\rightarrow \infty} \frac{E\sum_{i=\cweak+1}^{n+k}(\hw_i^+)^2}{n}-\frac{(\lim_{n\rightarrow \infty} \frac{E((\hw^+)^T\z)-E\sum_{i=1}^{\cweak}\hw_i^+}{n})^2}{\thetaweak}}.\nonumber \\ \label{eq:wwp}
\end{eqnarray}
where we recall that $\hw_i^+$ is the $i$-th element of vector $\hw^+$. Moreover, \cite{StojnicCSetam09} also establishes the way to determine a critical $\cweak$. Roughly speaking it establishes the following identity
\begin{equation}
\frac{(\lim_{n\rightarrow \infty} \frac{E((\hw^+)^T\z)-E\sum_{i=1}^{\cweak}\hw_i^+}{n})}{1-\lim_{n\rightarrow \infty}\frac{\cweak}{n}}
=\frac{(\lim_{n\rightarrow \infty} \frac{E((\hw^+)^T\z)-E\sum_{i=1}^{\cweak}\hw_i^+}{n})}{\thetaweak}\approxeq \lim_{n\rightarrow \infty} E\hw_{\cweak}^+.\label{eq:condcwp}
\end{equation}
To make the above results operational one would have to estimate the expectations that they contain. \cite{StojnicCSetam09,StojnicCSetamBlock09} established a technique powerful enough to do so. However, differently from \cite{StojnicCSetam09,StojnicCSetamBlock09} one has to be fairly careful when it comes to the distributions of the underlying random quantities. While the corresponding ones in \cite{StojnicCSetam09,StojnicCSetamBlock09} were relatively simple the ones that we face here are a bit harder to analytically quantify. We, hence present these considerations in a separate section below.

\subsection{Explicit characterization of $\lim_{n\rightarrow \infty} \frac{w(\Sw)}{\sqrt{n}}$}
\label{sec:expcharwsw}

We separately characterize all of the quantities needed for characterization of $\lim_{n\rightarrow \infty} \frac{w(\Sw)}{\sqrt{n}}$.

\subsubsection{Explicit characterization of $\lim_{n\rightarrow\infty}E\hw_{\cweak}^+$}
\label{sec:invthetaweak}

We start by characterizing $\hw_{cweak}^+$. To that end we define a random variable $\chi_d^+$ in the following way
\begin{equation}
(\chi_d^+)^2=\sum_{i=1}^d\max(\h_i,0)^2.\label{eq:defxidpl}
\end{equation}
Consider the following function $\gammaincpl(\cdot,d)$
\begin{equation}
\gammaincpl(\cdot,d)=\sum_{d_{ind}=0}^d\frac{\binom{d}{d_{ind}}}{2^d}\gammainc(\cdot,\frac{d_{ind}}{2}),\label{eq:defgammapl}
\end{equation}
where $\gammainc(\cdot,\frac{d_{ind}}{2})$ is the standard gamma incomplete function. Then following what was done in \cite{StojnicCSetam09,StojnicCSetamBlock09}
one has
\begin{equation}
\lim_{n\rightarrow\infty}E\hw_{\cweak}^+\approxeq\sqrt{2\gammaincpl^{-1}(\frac{1-\thetaweak}{1-\beta},d)},\label{eq:expcharcw}
\end{equation}
where $\gammaincpl^{-1}(\cdot,d)$ is the inverse of $\gammaincpl(\cdot,d)$.

\subsubsection{Explicit characterization of $\lim_{n\rightarrow\infty}\frac{E((\hw^+)^T\z)-E\sum_{i=1}^{\cweak}\hw_i^+}{n}$}
\label{sec:expcharsum}

One easily has
\begin{equation}
\lim_{n\rightarrow\infty}\frac{E((\hw^+)^T\z)-E\sum_{i=1}^{\cweak}\hw_i^+}{n}
=\lim_{n\rightarrow\infty}E\frac{\sum_{\cweak}^{n-\beta n}\hw_i^+}{n}=
\lim_{n\rightarrow\infty}E\frac{\sum_{(1-\thetaweak)n}^{(1-\beta) n}\hw_i^+}{n}.\label{eq:expchars1}
\end{equation}
Following further what was done in \cite{StojnicCSetam09,StojnicCSetamBlock09} one has
\begin{equation}
\lim_{n\rightarrow\infty}E\frac{\sum_{(1-\thetaweak)n}^{(1-\beta) n}\hw_i^+}{n}=(1-\beta)\sum_{d_{ind}=1}^{d}\frac{\binom{d}{d_{ind}}}{2^d}
\frac{\sqrt{2}\Gamma(\frac{d_{ind}+1}{2})}{\Gamma(\frac{d_{ind}}{2})}(1-\gammainc(\gammaincpl^{-1}(\frac{1-\thetaweak}{1-\beta},d),\frac{d_{ind}+1}{2})).\label{eq:expchars2}
\end{equation}
A combination of (\ref{eq:expchars1}) and (\ref{eq:expchars2}) gives
\begin{equation}
\lim_{n\rightarrow\infty}\frac{E((\hw^+)^T\z)-E\sum_{i=1}^{\cweak}\hw_i^+}{n}
=(1-\beta)\sum_{d_{ind}=1}^{d}\frac{\binom{d}{d_{ind}}}{2^d}
\frac{\sqrt{2}\Gamma(\frac{d_{ind}+1}{2})}{\Gamma(\frac{d_{ind}}{2})}(1-\gammainc(\gammaincpl^{-1}(\frac{1-\thetaweak}{1-\beta},d),\frac{d_{ind}+1}{2})).\label{eq:expchars3}
\end{equation}

\subsubsection{Explicit characterization of $\lim_{n\rightarrow\infty}\frac{E\sum_{\cweak}^{n+k}\hw_i^+}{n}$}
\label{sec:expcharsumsq}

We start with the following line of identities
\begin{eqnarray}
\frac{E\sum_{i=\cweak+1}^{n+k}\hw_i^2}{n} & = & \frac{E\sum_{i=\cweak+1}^{(1-\beta)n}\hw_i^2}{n}+\frac{E\sum_{i=(1-\beta)n+1}^n\hw_i^2}{n}
+\frac{E\sum_{i=n+1}^{n+\beta n}\hw_i^2}{n}\nonumber \\
& = & \frac{E\sum_{i=(1-\hthetaweak)n+1}^{(1-\beta)n}\hw_i^2}{n}+\frac{E\sum_{i=(1-\beta)n+1}^n\h_{(i-1)d+1}^2}{n}
+\frac{E\sum_{i=n+1}^{n+\beta n}\|\H_i^*\|_2^2}{n}\nonumber \\
& = & \frac{E\sum_{i=(1-\hthetaweak)n+1}^{(1-\beta)n}\hw_i^2}{n}+\frac{\beta n}{n} +\frac{\beta n(d-1)}{n}\nonumber \\
& = & \frac{E\sum_{i=(1-\hthetaweak)n+1}^{(1-\beta)n}\hw_i^2}{n}+\beta d.\label{eq:expcharsqs1}
\end{eqnarray}
Following further what was done in \cite{StojnicCSetam09,StojnicCSetamBlock09} one has
\begin{equation}
\lim_{n\rightarrow\infty}E\frac{\sum_{(1-\thetaweak)n}^{(1-\beta) n}(\hw_i^+)^2}{n}=(1-\beta)\sum_{d_{ind}=1}^{d}\frac{\binom{d}{d_{ind}}}{2^d}
\frac{2\Gamma(\frac{d_{ind}+2}{2})}{\Gamma(\frac{d_{ind}}{2})}(1-\gammainc(\gammaincpl^{-1}(\frac{1-\thetaweak}{1-\beta},d),\frac{d_{ind}+2}{2})).
\label{eq:expcharsqs2}
\end{equation}
A combination of (\ref{eq:expcharsqs1}) and (\ref{eq:expcharsqs2}) gives
\begin{equation}
\lim_{n\rightarrow\infty}\frac{E\sum_{i=\cweak+1}^{n+k}\hw_i^2}{n}
=(1-\beta) \sum_{d_{ind}=1}^{d}\frac{\binom{d}{d_{ind}}}{2^d}
\frac{2\Gamma(\frac{d_{ind}+2}{2})}{\Gamma(\frac{d_{ind}}{2})}(1-\gammainc(\gammaincpl^{-1}(\frac{1-\thetaweak}{1-\beta},d),\frac{d_{ind}+2}{2}))+\beta d.
\label{eq:expcharsqs3}
\end{equation}

We summarize the above results in the following theorem.

\begin{theorem}(Exact weak threshold)
Let $A$ be a $dm\times dn$ measurement matrix in (\ref{eq:system})
with the null-space uniformly distributed in the Grassmanian. Let
the unknown $\x$ in (\ref{eq:system}) be positive $k$-block-sparse with the length of its blocks $d$. Further, let the location and the directions of nonzero blocks of $\x$ be arbitrarily chosen but fixed.
Let $k,m,n$ be large
and let $\alpha=\frac{m}{n}$ and $\betaweak=\frac{k}{n}$ be constants
independent of $m$ and $n$. Let $\gammainc(\cdot,\cdot)$ and $\gammainc^{-1}(\cdot,\cdot)$ be the incomplete gamma function and its inverse, respectively. Further, let $\gammaincpl(\cdot,\cdot)$ be the following function
\begin{equation}
\gammaincpl(\cdot,d)=\sum_{d_{ind}=0}^d\frac{\binom{d}{d_{ind}}}{2^d}\gammainc(\cdot,\frac{d_{ind}}{2}),\label{eq:thmweakblockposdefgammapl}
\end{equation}
and let $\gammaincpl^{-1}(\cdot,\cdot)$ be its inverse. Let $\hthetaweak$, ($\betaweak\leq \hthetaweak\leq 1$) be the solution of
\begin{equation}
\frac{(1-\betaweak)\sum_{d_{ind}=1}^{d}\frac{\binom{d}{d_{ind}}}{2^d}
\frac{\sqrt{2}\Gamma(\frac{d_{ind}+1}{2})}{\Gamma(\frac{d_{ind}}{2})}(1-\gammainc(\gammaincpl^{-1}(\frac{1-\thetaweak}{1-\betaweak},d),\frac{d_{ind}+1}{2}))}{\hthetaweak}
\approxeq \sqrt{2\gammaincpl^{-1}(\frac{1-\thetaweak}{1-\betaweak},d)}.\label{eq:thmweakblockcond}
\end{equation}
\begin{enumerate}
\item If $\alpha$ and $\betaweak$ further satisfy
\begin{multline}
\alpha d>(1-\betaweak) \sum_{d_{ind}=1}^{d}\frac{\binom{d}{d_{ind}}}{2^d}
\frac{2\Gamma(\frac{d_{ind}+2}{2})}{\Gamma(\frac{d_{ind}}{2})}(1-\gammainc(\gammaincpl^{-1}(\frac{1-\hthetaweak}{1-\betaweak},d),\frac{d_{ind}+2}{2}))+\beta d\\
-\frac{((1-\betaweak)\sum_{d_{ind}=1}^{d}\frac{\binom{d}{d_{ind}}}{2^d}
\frac{\sqrt{2}\Gamma(\frac{d_{ind}+1}{2})}{\Gamma(\frac{d_{ind}}{2})}(1-\gammainc(\gammaincpl^{-1}(\frac{1-\hthetaweak}{1-\betaweak},d),\frac{d_{ind}+1}{2})))^2}{\hthetaweak}
\label{eq:thmweakblockposalpha1}
\end{multline}
then with overwhelming probability the solution of (\ref{eq:l2l1non}) is the positive $k$-block-sparse $\x$ from (\ref{eq:system}).

\item Moreover, if $\alpha$ and $\betaweak$ are such that
\begin{multline}
\alpha d<(1-\betaweak) \sum_{d_{ind}=1}^{d}\frac{\binom{d}{d_{ind}}}{2^d}
\frac{2\Gamma(\frac{d_{ind}+2}{2})}{\Gamma(\frac{d_{ind}}{2})}(1-\gammainc(\gammaincpl^{-1}(\frac{1-\hthetaweak}{1-\betaweak},d),\frac{d_{ind}+2}{2}))+\beta d\\
-\frac{((1-\betaweak)\sum_{d_{ind}=1}^{d}\frac{\binom{d}{d_{ind}}}{2^d}
\frac{\sqrt{2}\Gamma(\frac{d_{ind}+1}{2})}{\Gamma(\frac{d_{ind}}{2})}(1-\gammainc(\gammaincpl^{-1}(\frac{1-\hthetaweak}{1-\betaweak},d),\frac{d_{ind}+1}{2})))^2}{\hthetaweak}
\label{eq:thmweakblockposalpha2}
\end{multline}
then with overwhelming probability there will be a positive $k$-block-sparse $\x$ (from a set of $\x$'s with fixed locations and directions of nonzero blocks) that satisfies (\ref{eq:system}) and is \textbf{not} the solution of (\ref{eq:l2l1non}).
\end{enumerate}
\label{thm:thmweakthrblockpos}
\end{theorem}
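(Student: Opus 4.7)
The plan is to stitch together the three components the text has already assembled: Theorem \ref{thm:thmgenweakblock} (null-space geometric characterization), Theorem \ref{thm:Gordonmesh} (Gordon's escape-through-mesh inequality), and the explicit computations of Section \ref{sec:expcharwsw}. For the first (sufficiency) direction, I would start by noting that since $A$ has i.i.d.\ standard normal entries, its null-space is uniformly distributed in the Grassmanian of $d(n-m)$-dimensional subspaces of $\mathbb{R}^{dn}$, so Theorem \ref{thm:Gordonmesh} applies with $Y = \mathrm{null}(A)$ and $S = \Sw'$. By Theorem \ref{thm:thmgenweakblock}, the condition $Y\cap \Sw' = \emptyset$ (modulo boundary) implies that the positive $k$-block-sparse $\x$ is the unique solution of (\ref{eq:l2l1non}). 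Via the orthogonal rotations $\Theta_i$ in the text we replace $\Sw'$ with $\Sw$ without changing the width.

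Next I would evaluate $w(\Sw)$ using the Lagrange-duality reduction (\ref{eq:workww2}) that the excerpt has already set up. Following \cite{StojnicCSetam09,StojnicCSetamBlock09}, the dual is solved by keeping the $\cweak = (1-\thetaweak)n$ smallest components of $\hw^+$ inactive, yielding the representation (\ref{eq:wwp}). The critical index $\cweak$ is pinned down by the optimality condition (\ref{eq:condcwp}). Plugging in the three explicit evaluations of Section \ref{sec:expcharwsw} — namely (\ref{eq:expcharcw}) for $E\hw_{\cweak}^+$, (\ref{eq:expchars3}) for the linear functional, and (\ref{eq:expcharsqs3}) for the sum of squares — turns (\ref{eq:condcwp}) into the fixed-point equation (\ref{eq:thmweakblockcond}) defining $\hthetaweak$, and turns $w(\Sw)^2/n$ into precisely the right-hand side of the inequality in (\ref{eq:thmweakblockposalpha1}). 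The sufficiency conclusion then follows from Theorem \ref{thm:Gordonmesh}: when $\alpha d$ strictly exceeds this quantity one has $w(\Sw) < \sqrt{dm} - \frac{1}{4\sqrt{dm}}$ for all large $n$, and the probability bound (\ref{eq:thmesh}) decays exponentially, giving overwhelming probability. A concentration argument (sub-Gaussian concentration of $\chi_d^+$-type variables around their means) is needed to upgrade the expectations in Section \ref{sec:expcharwsw} to high-probability statements, but this is standard and parallels \cite{StojnicCSetam09,StojnicCSetamBlock09}.

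For the converse direction (part 2), I would mimic the strategy of \cite{StojnicUpper10,StojnicUpperBlock10}. The key step is to invoke the second half of Theorem \ref{thm:thmgenweakblock} together with a lower-bound companion to Gordon's inequality — essentially showing that $w(\Sw)$ is not merely an upper bound on the typical Gaussian width of the intersection of $\mathrm{null}(A)$ with the relevant cone, but captures its correct scale. Concretely, one shows that when $\alpha d$ falls below the threshold expression in (\ref{eq:thmweakblockposalpha2}), there exists, with overwhelming probability, some $\w \in \mathrm{null}(A)$ with $\w_i \ge 0$ on the zero-blocks and satisfying the strict inequality (\ref{eq:thmeqgenweak2block}); this produces a positive $k$-block-sparse $\x$ that beats $\tilde\x$ in the objective of (\ref{eq:l2l1non}). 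The perturbation/$\epsilon$-slack that appears in Theorem \ref{thm:thmweakthrblock} for the non-signed case is the mechanism for converting expectation-level statements into high-probability ones; one would introduce analogous $\epsilon_1^{(c)},\epsilon_2^{(c)},\epsilon_1^{(g)},\epsilon_1^{(m)},\epsilon_3^{(g)}$ slacks here and let them tend to zero.

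The main obstacle is the converse: the escape-through-mesh inequality gives a one-sided guarantee, so one must produce a matching witness in the null-space rather than merely bound a width. This requires the sharper analysis of \cite{StojnicUpperBlock10} — in particular an anti-concentration argument showing that when the width condition fails by a constant margin, the dual optimizer of (\ref{eq:workww2}) can be realized (with high probability) by an actual element of the Gaussian null-space. The positivity constraint $\w_i \ge 0$ on the zero-blocks, which distinguishes the present analysis from the block case of Theorem \ref{thm:thmweakthrblock}, is handled precisely by the appearance of $\H_i^+$ in place of $\|\H_i\|_2$ and hence by $\gammaincpl$ in place of $\gammainc$ — everything else in the argument is a mechanical parallel of \cite{StojnicCSetamBlock09,StojnicUpperBlock10}.
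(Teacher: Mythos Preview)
Your proposal is correct and follows essentially the same approach as the paper: the paper's own proof consists of two sentences stating that part 1 ``follows from the discussion presented above'' (i.e., Theorems \ref{thm:thmgenweakblock} and \ref{thm:Gordonmesh} together with the explicit computations of Section \ref{sec:expcharwsw}) and that part 2 ``follows from the considerations presented in \cite{StojnicUpperBlock10,StojnicUpper10,StojnicRegRndDlt10}.'' Your write-up is simply a more detailed unpacking of exactly these ingredients, including the correct identification of the positivity modification ($\H_i^+$ and $\gammaincpl$ in place of $\|\H_i\|_2$ and $\gammainc$) as the only substantive change relative to the unsigned block case.
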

\begin{proof}
The first part follows from the discussion presented above. The second part follows from the considerations presented in \cite{StojnicUpperBlock10,StojnicUpper10,StojnicRegRndDlt10}.
\end{proof}

\noindent \textbf{Remark:} To make writing easier in the previous theorem we removed all $\epsilon$'s used in Theorem \ref{thm:thmweakthrblock} and typically used in \cite{StojnicCSetam09,StojnicCSetamBlock09,StojnicUpper10,StojnicUpperBlock10}.

The above theorem essentially settles typical behavior of the $\ell_2/\ell_1$ optimization from (\ref{eq:l2l1non}) when used for solving (\ref{eq:system}) or (\ref{eq:l0}) assuming that $\x$ is a priori known to be positive and block-sparse.

The results for the weak threshold obtained from the above theorem
are presented in Figure \ref{fig:weakblocksppos}. More precisely, on the left hand side of Figure \ref{fig:weakblocksppos} we present the results that can be obtained from Theorem \ref{thm:thmweakthrblockpos}. In addition to that we on the right hand side of Figure \ref{fig:weakblocksppos} present the results one can obtain using Theorem \ref{thm:thmweakthrblock}. As is expected, given that the positive case assumes a bit more knowledge about $\x$ the recovery abilities of an algorithm (namely, in this case the one given in (\ref{eq:l2l1non})) tailored for such a case are a bit higher.
\begin{figure}[htb]
\begin{minipage}[b]{0.5\linewidth}
\centering
\centerline{\epsfig{figure=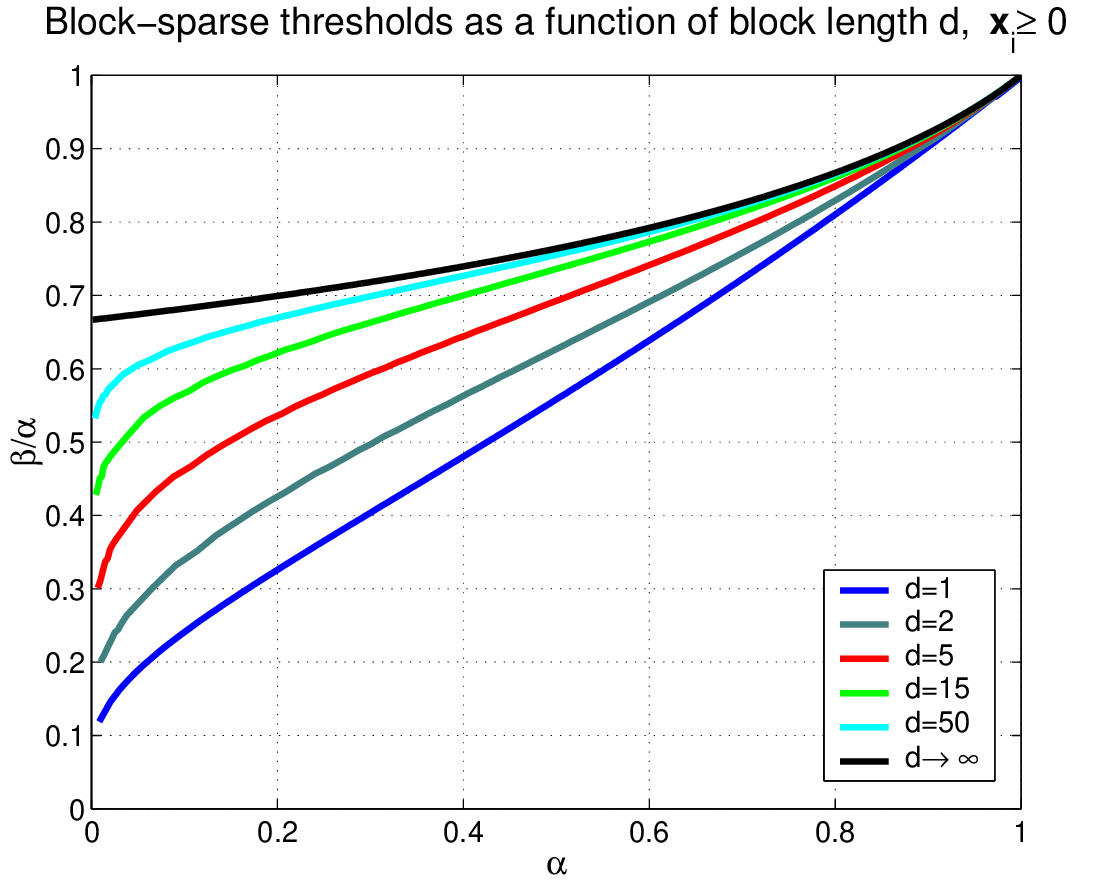,width=8cm,height=6cm}}
\end{minipage}
\begin{minipage}[b]{0.5\linewidth}
\centering
\centerline{\epsfig{figure=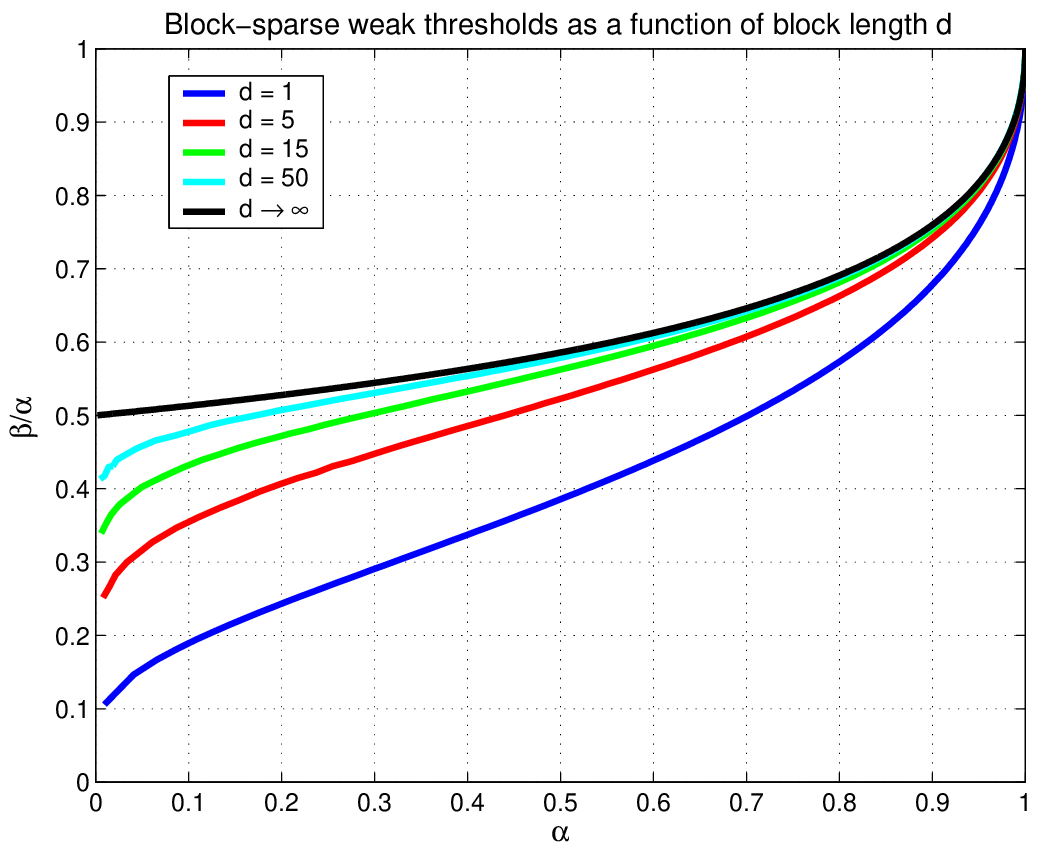,width=8cm,height=6cm}}
\end{minipage}
\caption{Left: Theoretical \emph{weak} threshold as a function of block length $d$ - $\x_i\geq 0,1\leq i\leq n$; Right: theoretical \emph{weak} threshold as a function of block length $d$ - general $\x$}
\label{fig:weakblocksppos}
\end{figure}

To get a feeling how accurately the presented analysis portraits the real behavior of the analyzed algorithms we below present a small set of results we obtained through numerical experiments.

\subsection{Numerical experiments}
\label{sec:simulp}

In this section we briefly discuss the results that we obtained from numerical experiments. In our numerical experiments we fixed $d=15$ and $n=100$ when $\alpha\geq 0.2$ On the other hand to get a bit of a finer resolution we set $n=150$ when $\alpha=0.1$. We then generated matrices $A$ of size $dm\times dn$ with $m=(15,20,30,\dots,90,99)$. The components of the measurement matrices $A$ were generated as i.i.d. zero-mean unit variance Gaussian random variables.
For each $m$ we generated randomly $k$-block-sparse positive signals $\x$ for several different values of $k$ from the transition zone (the locations of non-zero elements of $\x$ were chosen randomly as well). For each combination $(k,m)$ we generated $100$ different problem instances and recorded the number of times the  $\ell_2/\ell_1$-optimization algorithm from (\ref{eq:l2l1non}) failed to recover the correct $k$-block-sparse positive $\x$.
The obtained data are then interpolated and graphically presented on the right hand side of Figure \ref{fig:weakblocksppossimul}. The color of any point shows the probability of having $\ell_2/\ell_1$-optimization from (\ref{eq:l2l1non}) succeed for a combination $(\alpha,\beta)$ that corresponds to that point. The colors are mapped to probabilities according to the scale on the right hand side of the figure.
The simulated results can naturally be compared to the theoretical prediction from Theorem \ref{thm:thmweakthrblockpos}. Hence, we also show on the right hand side the theoretical value for the threshold calculated according to Theorem \ref{thm:thmweakthrblockpos} (and obviously shown on the left hand side of the figure as well). We observe that the simulation results are in a good agreement with the theoretical calculation.

\begin{figure}[htb]
\begin{minipage}[b]{0.5\linewidth}
\centering
\centerline{\epsfig{figure=WeakThrBlSpNon.eps,width=8cm,height=6cm}}
\end{minipage}
\begin{minipage}[b]{0.5\linewidth}
\centering
\centerline{\epsfig{figure=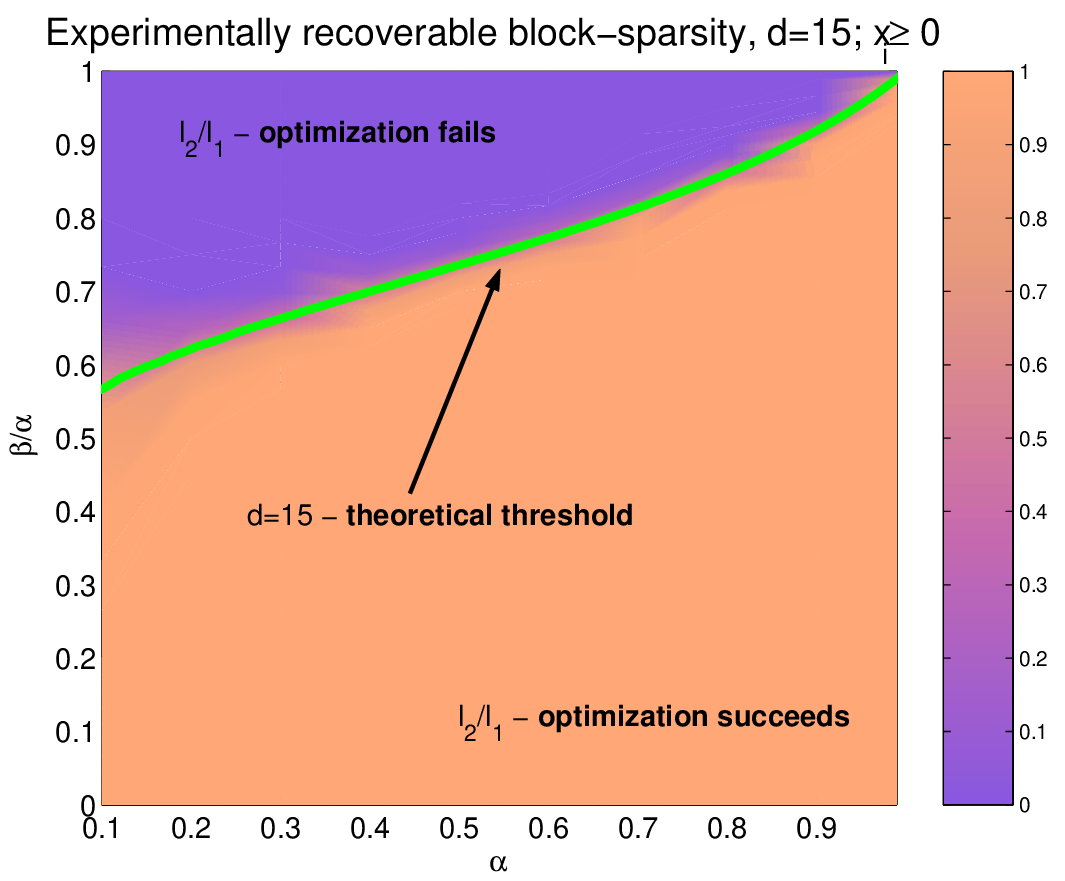,width=8cm,height=6cm}}
\end{minipage}
\caption{Left: Theoretical \emph{weak} threshold as a function of fraction of \emph{hidden} known support; Right: Experimentally recoverable sparsity; fraction of \emph{hidden} known support $\eta=\frac{3}{4}$}
\label{fig:weakblocksppossimul}
\end{figure}

\subsection{$d\rightarrow\infty$ -- weak threshold}
\label{sec:dinftyweak}

When the block length is large one can simplify the conditions for finding the thresholds obtained in the previous section. Hence,
in this section we establish weak thresholds when $d\rightarrow\infty$, i.e. we establish ultimate benefit of $\ell_2/\ell_1$-optimization from (\ref{eq:l2l1non}) when used in recovery of block-sparse positive vectors from (\ref{eq:system}). Throughout this section we choose $d\rightarrow\infty$ in order to simplify the exposition.

Following the reasoning presented in \cite{StojnicCSetamBlock09} it is not that difficult to see that choosing $\htheta_w=1$ in (\ref{eq:thmweakblockposalpha1}) would provide a valid threshold condition as well (in general $\htheta_w=1$ is not optimal for a fixed value $d$, i.e. when $d$ is not large a better choice for $\htheta_w$ is the one given in Theorem \ref{thm:thmweakthrblockpos}). However it can be shown that the choice $\htheta_w=1$ gives us the following corollary of Theorem \ref{thm:thmweakthrblockpos}.

\begin{corollary}($d\rightarrow\infty$)
Let $A$ be a $dm\times dn$ measurement matrix in (\ref{eq:system})
with the null-space uniformly distributed in the Grassmanian. Let
the unknown $\x$ in (\ref{eq:system}) be positive $k$-block-sparse with the length of its blocks $d\rightarrow\infty$. Further, let the location and the directions of nonzero blocks of $\x$ be arbitrarily chosen but fixed.
Let $k,m,n$ be large
and let $\alpha=\frac{m}{n}$ and $\betaweak^{\infty}=\frac{k}{n}$ be constants
independent of $m$ and $n$. Assume that $d$ is independent of $n$.
If $\alpha$ and $\betaweak^{\infty}$ satisfy
\begin{equation}
\alpha > \frac{\betaweak^{\infty}(3-\betaweak^{\infty})}{2}\label{eq:corweakalpha}
\end{equation}
then the solution of (\ref{eq:l2l1non}) is with overwhelming
probability the positive $k$-block sparse $\x$ in (\ref{eq:system}).\label{thm:corweakthr}
\end{corollary}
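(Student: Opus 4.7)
The plan is to derive the corollary by specializing Theorem \ref{thm:thmweakthrblockpos} to the choice $\hthetaweak = 1$ suggested by the paragraph immediately preceding the statement, and then taking the $d \to \infty$ limit. The key point is that $\hthetaweak = 1$ may be suboptimal for finite $d$ but (i) it is always a feasible upper endpoint in the range $\betaweak \leq \hthetaweak \leq 1$, and (ii) plugging a feasible $\hthetaweak$ into the right-hand side of (\ref{eq:thmweakblockposalpha1}) yields a valid (possibly loose) sufficient condition for weak recovery. So the whole task is to evaluate that right-hand side when $\hthetaweak = 1$ and $d$ is large.

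First I would observe that the choice $\hthetaweak=1$ makes the argument $\frac{1-\hthetaweak}{1-\betaweak}=0$, and since $\gammaincpl^{-1}(0,d)=0$ and $\gammainc(0,\cdot)=0$, every factor of the form $(1-\gammainc(\gammaincpl^{-1}(\tfrac{1-\hthetaweak}{1-\betaweak},d),\cdot))$ collapses to $1$. Thus condition (\ref{eq:thmweakblockposalpha1}) reduces to
\begin{equation*}
\alpha d > (1-\betaweak)\sum_{d_{ind}=1}^{d}\frac{\binom{d}{d_{ind}}}{2^d}\frac{2\Gamma(\tfrac{d_{ind}+2}{2})}{\Gamma(\tfrac{d_{ind}}{2})} + \betaweak d - \left((1-\betaweak)\sum_{d_{ind}=1}^{d}\frac{\binom{d}{d_{ind}}}{2^d}\frac{\sqrt{2}\Gamma(\tfrac{d_{ind}+1}{2})}{\Gamma(\tfrac{d_{ind}}{2})}\right)^{2}.
\end{equation*}

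Next I would simplify each sum. Using the identity $\frac{2\Gamma(\tfrac{d_{ind}+2}{2})}{\Gamma(\tfrac{d_{ind}}{2})}=d_{ind}$, the first sum is $\sum_{d_{ind}=0}^{d}\binom{d}{d_{ind}}2^{-d}d_{ind}=\tfrac{d}{2}$, so the first two additive terms give $(1-\betaweak)\tfrac{d}{2}+\betaweak d = \tfrac{(1+\betaweak)d}{2}$. For the quadratic term I would note that $\frac{\sqrt{2}\Gamma(\tfrac{d_{ind}+1}{2})}{\Gamma(\tfrac{d_{ind}}{2})}$ is exactly the mean of a $\chi$ distribution with $d_{ind}$ degrees of freedom, so the inner sum is $E[\chi_D]$ where $D\sim\mathrm{Binomial}(d,1/2)$. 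Since $D$ concentrates at $d/2$ as $d\to\infty$ and $E\chi_n \sim \sqrt{n}$ smoothly in $n$, a routine concentration/continuity argument (or direct Stirling expansion of $\Gamma$) gives $E[\chi_D]=\sqrt{d/2}(1+o(1))$ as $d\to\infty$. Hence the subtracted square tends to $(1-\betaweak)^2 \tfrac{d}{2}$.

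Plugging these evaluations back and dividing by $d$ yields the asymptotic sufficient condition
\begin{equation*}
\alpha > \frac{1}{2}\bigl[(1+\betaweak^{\infty}) - (1-\betaweak^{\infty})^{2}\bigr] = \frac{\betaweak^{\infty}(3-\betaweak^{\infty})}{2},
\end{equation*}
which is (\ref{eq:corweakalpha}). The main obstacle is the asymptotic evaluation of $E[\chi_D]$: one must justify interchanging the expectation over $D$ with the leading-order Stirling asymptotic of the ratio of gamma functions, and show the $o(1)$ errors in the two sums do not disturb the cancellation that produces $\betaweak(3-\betaweak)/2$. A clean way is to write $\frac{\sqrt{2}\Gamma(\tfrac{d_{ind}+1}{2})}{\Gamma(\tfrac{d_{ind}}{2})} = \sqrt{d_{ind}}(1 + O(1/d_{ind}))$ uniformly for $d_{ind}\geq 1$, combined with the standard CLT-type bound $E\sqrt{D}=\sqrt{d/2}(1+O(1/d))$ for $D\sim\mathrm{Binomial}(d,1/2)$; the $d_{ind}=0$ term is negligible as it has binomial weight $2^{-d}$. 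Once this asymptotic is in hand, the rest of the argument is elementary algebra.
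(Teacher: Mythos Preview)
Your proposal is correct and follows essentially the same route as the paper: set $\hthetaweak=1$ in (\ref{eq:thmweakblockposalpha1}), observe that all the $(1-\gammainc(\cdot))$ factors collapse to $1$, evaluate the first binomial-weighted sum exactly as $d/2$, and then use concentration of the $\mathrm{Binomial}(d,1/2)$ weights at $d_{ind}=d/2$ together with the asymptotic $\frac{\sqrt{2}\Gamma((d_{ind}+1)/2)}{\Gamma(d_{ind}/2)}\sim\sqrt{d_{ind}}$ to get the squared term $\sim (1-\betaweak)^2 d/2$. The only cosmetic difference is that the paper substitutes $d_{ind}=d/2$ directly into the gamma ratio and then takes the $d\to\infty$ limit of $\frac{1}{d}\bigl(\frac{\sqrt{2}\Gamma((d/2+1)/2)}{\Gamma(d/4)}\bigr)^{2}\to\tfrac12$, whereas you frame the same step more explicitly as $E[\chi_D]$ with $D\sim\mathrm{Binomial}(d,1/2)$; both yield the same asymptotic and the same final condition (\ref{eq:corweakalpha}).
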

\begin{proof}
Let $\hthetaweak\rightarrow 1$ in (\ref{eq:thmweakblockposalpha1}). Then from (\ref{eq:thmweakblockposalpha1}) we have
\begin{eqnarray}
\alpha & > & \frac{(1-\betaweak)d/2
+\betaweak d}{d} - \frac{\left ((1-\betaweak)\frac{\sqrt{2}\Gamma(\frac{d/2+1}{2})}{\Gamma(\frac{d}{4})}\right ) ^2}{d}\nonumber \\
& = & \frac{1+\betaweak}{2}- \frac{\left ((1-\betaweak)\frac{\sqrt{2}\Gamma(\frac{d/2+1}{2})}{\Gamma(\frac{d}{4})}\right ) ^2}{d}.\label{eq:corweakalpha1}
\end{eqnarray}
When $d\rightarrow \infty$ we have $\lim_{d\rightarrow \infty}\frac{1}{d}\left (\frac{\sqrt{2}\Gamma(\frac{d/2+1}{2})}{\Gamma(\frac{d}{4})}\right )^2=\frac{1}{2}$. Then from (\ref{eq:corweakalpha1}) we easily obtain the condition
\begin{equation*}
\alpha > \frac{\betaweak(3-\betaweak)}{2}\label{eq:corweakalpha2}
\end{equation*}
which is the same as the condition stated in (\ref{eq:corweakalpha}). This therefore concludes the proof.
\end{proof}
The results for the weak threshold obtained in the above corollary are shown in figures in earlier sections as curves denoted by $d\rightarrow\infty$.

\section{Conclusion}
\label{sec:conc}

In this paper we studied a variant of the standard compressed sensing setup. The variant that we studied assumes vectors that are sparse but also structured. The type of structure that we studied is the co-called block-sparsity. While the standard block-sparsity has been studied in \cite{StojnicCSetamBlock09,StojnicUpperBlock10} here we combine it with another type of structure, that accounts for a priori known (same) signs of unknown vectors.

Typically when the unknown vectors are block-sparse one handles them by employing an $\ell_2/\ell_1$ norm combination in place of the standard $\ell_1$ norm. We looked at a signed modification of $\ell_2/\ell_1$ norm and analyzed how it fares when used for recovery of block-sparse signed vectors from an under-determined system of linear equations. The analysis we provided viewed linear systems in a statistical context. For such systems we then established a precise probabilistic characterization of problem dimensions for which the signed modification of $\ell_2/\ell_1$ norm is guaranteed to recover with overwhelming probability the block-sparsest positive unknown vector.

As was the case with many results we developed (see, e.g. \cite{StojnicCSetam09,StojnicMoreSophHopBnds10,StojnicLiftStrSec13,StojnicTowBettCompSens13}), the purely theoretical results we presented in this paper are valid for the so-called Gaussian models, i.e. for systems with i.i.d. Gaussian coefficients. Such an assumption significantly simplified our exposition. However, all results that we presented can easily be extended to the case of many other models of randomness. There are many ways how this can be done. Instead of recalling on them here we refer to a brief discussion about it that we presented in \cite{StojnicMoreSophHopBnds10,StojnicLiftStrSec13}.

As for usefulness of the presented results, similarly to many of the results we created within compressed sensing, there is hardly any limit. One can look at a host of related problems from the compressed sensing literature. These include for example, all noisy variations, approximately sparse unknown vectors, vectors with a priori known structure (block-sparse, binary/box constrained etc.), all types of low rank matrix recoveries, various other algorithms like $\ell_q$-optimization, SOCP's, LASSO's, and many, many others. Each of these problems has its own specificities and adapting the methodology presented here usually takes a bit of work but in our view is now a routine. While we will present some of these applications we should emphasize that their contribution will be purely on an application level.

\begin{singlespace}
\bibliographystyle{plain}
\bibliography{PosBlSpRefs}
\end{singlespace}

\end{document}